\documentclass[a4paper]{article}
\usepackage[T1]{fontenc}
\usepackage[french,english]{babel}
\usepackage{multicol}
\usepackage{amssymb,amsmath}
\usepackage{float}
\usepackage{theorem}
\usepackage{functan}
\usepackage{graphicx,epsfig}


\newcommand{\argmin}{\mathop{\mathrm{argmin}}}
\newcommand{\argmax}{\mathop{\mathrm{argmax}}}

\makeatletter
\DeclareRobustCommand{\qed}{%
  \ifmmode 
  \else \leavevmode\unskip\penalty9999 \hbox{}\nobreak\hfill
  \fi
  \quad\hbox{\qedsymbol}}
\newcommand{\openbox}{\leavevmode
  \hbox to.77778em{%
  \hfil\vrule
  \vbox to.675em{\hrule width.6em\vfil\hrule}%
  \vrule\hfil}}
\newcommand{\qedsymbol}{\openbox}
\newenvironment{proof}[1][Proof]{\par
  \normalfont
  \topsep6\p@\@plus6\p@ \trivlist
  \item[\hskip\labelsep\bfseries
    #1\@addpunct{.}]\ignorespaces
}{%
  \qed\endtrivlist
}
\makeatother

\numberwithin{equation}{section}

\Macro{C}{\mathcal{C}}
\Macro{C0}{\mathcal{C}_0(G)}
\Macro{C1}{\mathcal{C}^1}
\Macro{C2}{\mathcal{C}^2}
\Macro{Ci}{\mathcal{C}^{\infty}}

\Macro{R}{\mathbb{R}}
\Macro{N}{\mathbb{N}}
\Macro{Z}{\mathbb{Z}}

\Macro{Pb}{\mathcal{P}}

\Macro{P}{\mathbb{P}}
\Macro{esp}{\mathbb{E}}
\Macro{var}{\mathbb{V}}
\Macro{nor}{\mathcal{N}}

\Macro{L}{\mathcal{L}}

\Macro{F}{\mathcal{F}}

\Macro{ind}{\mathbf{1}}

\Macro{I}{\mathcal{I}}

\newnorm{matrix}{\delcasesize{#1}%
{\left|\left|\left| #2\right|\right|\right|}%
{|||#2|||}%
{#1|#1|#1|#2 #1|#1|#1|}}

\newtheorem{thm}{Theorem}[section]

\newtheorem{prop}{Proposition}[section]
\newtheorem{lem}{Lemma}[section]

{\theorembodyfont{\rmfamily} }

\title{Adaptive optimal allocation in stratified sampling methods}
\author{ Pierre Etor\'e\thanks{CERMICS, Universit\'e Paris Est, 6-8 avenue Blaise Pascal, Cit\'e
Descartes, Champs-sur-Marne, 77455 Marne la Vall\'ee Cedex 2,
e-mail : etore@cermics.enpc.fr, supported by the ANR project ADAP'MC},   Benjamin Jourdain\thanks{project team Mathfi, CERMICS, Universit\'e Paris Est, 6-8 avenue Blaise Pascal, Cit\'e
Descartes, Champs-sur-Marne, 77455 Marne la Vall\'ee Cedex 2,
e-mail : jourdain@cermics.enpc.fr}}
\begin{document}
\maketitle
\begin{abstract}
In this paper, we propose a stratified sampling algorithm in which the random
drawings made in the strata to compute the expectation of interest are
also used to adaptively modify the proportion of further drawings in each
stratum. These proportions converge to the optimal allocation in terms
of variance reduction. And our stratified estimator is asymptotically
normal with asymptotic variance equal to the minimal one. Numerical
experiments confirm the efficiency of our algorithm.
  \end{abstract}

\section*{Introduction}
Let $X$ be a $\m{R}^d$-valued random variable and $f:\m{R}^d\rightarrow\m{R}$ a measurable function such that $\m{esp}(f^2(X))<~\infty$.
We are interested in the computation of $c=\m{esp}(f(X))$ using a
stratified sampling Monte-Carlo estimator.
We suppose that $(A_i)_{1\leq i\leq I}$ is a partition of $\m{R}^{d}$
into $I$ {\it strata} such that $p_i=\m{P}[X\in A_i]$ is known
explicitely for $i\in\{1,\hdots,I\}$. Up to removing some strata, we
assume from now on that $p_i$ is positive for all
$i\in\{1,\hdots,I\}$. The stratified Monte-Carlo estimator of $c$ (see
\cite{glass2} p.209-235 and the references therein for a presentation
more detailed than the current introduction) is
based on the equality $\m{esp}(f(X))=\sum_{i=1}^I
p_i\m{esp}(f(X_i))$ where $X_i$ denotes a random variable distributed
according to the conditional law of $X$ given $X\in A_i$. Indeed, when
the variables $X_i$ are simulable, it is possible
to estimate each expectation in the right-hand-side using $N_i$ i.i.d
drawings of $X_i$. Let $N=\sum_{i=1}^IN_i$ be the total number of
drawings (in all the strata) and $q_i=N_i/N$ denote the proportion of drawings made in stratum $i$.

Then $\widehat{c}$ is defined by
$$\widehat{c}=\sum_{i=1}^I\frac{p_i}{N_i}\sum_{j=1}^{N_i}f(X_i^j)=
\frac{1}{N}\sum_{i=1}^I\frac{p_i}{q_i}\sum_{j=1}^{q_iN}f(X_i^j),$$
where for each $i$ the $X_i^j$'s, $1\leq j\leq N_i$, are distributed as~$X_i$,
and all the $X_i^j$'s, for  $1\leq i\leq I$, $1\leq j\leq N_i$ are drawn independently.
This stratified sampling estimator can be implemented for instance when
$X$ is distributed according to the Normal law on $\m{R}^d$,
$A_i=\{x\in\m{R}^d:y_{i-1}<u'x\leq y_i\}$ where
$-\infty=y_0<y_1<\hdots<y_{I-1}<y_I=+\infty$ and $u\in\m{R}^d$ is such
that $|u|=1$. Indeed, then one has $p_i=N(y_i)-N(y_{i-1})$ with $N(.)$
denoting the cumulative distribution function of the one dimensional
normal law and it is easy to simulate according to the conditional law
of $X$ given $y_{i-1}<u'X\leq y_i$ (see section \ref{ssopt} for a
numerical example in the context of options pricing).
\vspace{0.2cm}
We have $\m{esp}(\widehat{c})=c$ and
\begin{equation}
 \label{var_est_strat}
\m{var}(\widehat{c})=\sum_{i=1}^I\frac{p_i^2\sigma_i^2}{N_i}=
\frac{1}{N}\sum_{i=1}^I\frac{p_i^2\sigma_i^2}{q_i}=
\frac{1}{N}\sum_{i=1}^I\Big(\frac{p_i\sigma_i}{q_i}\Big)^2q_i\geq
\frac{1}{N}\Big(\sum_{i=1}^I\frac{p_i\sigma_i}{q_i}q_i\Big)^2,
\end{equation}
where $\sigma_i^2=\m{var}(f(X_i))=\m{var}(f(X)|X\in A_i)$ for all $1\leq i\leq I$.

\vspace{0.1cm}
During all the sequel we consider that 
$$(H)\quad\sigma_i>0\text{ for at least one index }i.$$

\vspace{0.2cm}
The brute force Monte Carlo estimator of $\m{esp}f(X)$ is
$\frac{1}{N}\sum_{j=1}^Nf(X^j)$, with the $X^j$'s i.i.d. drawings of
$X$. Its variance is
$$\frac{1}{N}\left(\sum_{i=1}^Ip_i(\sigma_i^2+\m{esp}^2(f(X_i)))-\left(\sum_{i=1}^Ip_i\m{esp}(f(X_i))\right)^2\right)\geq \frac{1}{N}\sum_{i=1}^Ip_i\sigma_i^2.$$

For given strata the stratified estimator achieves variance reduction if
the allocations $N_i$ or equivalently the proportions $q_i$ are properly
chosen. For instance, for the so-called proportional allocation
$q_i=p_i$, $\forall i$, the variance of the stratified estimator is
equal to the previous lower bound of the variance of the brute force
Monte Carlo estimator. For the choice
$$q_i=\dfrac{p_i\sigma_i}{\sum_{j=1}^Ip_j\sigma_j}=:q_i^*,\quad
\forall\, 1\leq i\leq I,$$
the lower-bound in \eqref{var_est_strat} is attained. We speak of {\it optimal allocation}. We then have
$$\m{var}(\widehat{c})=\frac{1}{N}\Big(\sum_{i=1}^Ip_i\sigma_i\Big)^2=:\frac{\sigma_*^2}{N},$$
and no choice of the $q_i$'s can achieve a smaller variance of $\widehat{c}$. 

In general when the conditional expectations $\m{esp}(f(X)|X\in A_i)=\m{esp}(f(X_i))$ are
unknown, then so are the conditional variance $\sigma_i^2$. Therefore
optimal allocation of the drawings is not feasible at once. One can of
course estimate the conditional variances and the optimal proportions by a first Monte Carlo
algorithm and run a second Monte Carlo procedure with drawings
independent from the first one to compute the stratified estimator
corresponding to these estimated proportions. But, as suggested in
\cite{arou} in the different context of importance sampling methods, it
is a pity not to use the drawings made in the first Monte Carlo
procedure also for the final computation of the conditional expectations.

Instead of running two successive Monte Carlo procedures, we can
think to get a first estimation of the  $\sigma_i$'s, using the first
drawings of the $X_i$'s made to compute the stratified estimator. We could
then estimate the optimal allocations before making further drawings allocated in the strata according to these estimated proportions. We can next get another estimation of the $\sigma_i$'s, compute
again the allocations and so on. Our goal is thus to design and study
such an {\it adaptive stratified estimator}. The
estimator is described in Section \ref{sec_algo}. In particular, we
propose a version of the algorithm such that at each step, the
allocation of the new
drawings in the strata is not simply proportional to the current estimation of
the optimal proportions but chosen in order to minimize the variance of
the stratified estimator at the end of the step. A Central Limit Theorem for this
estimator is shown in Section \ref{sec-vit}. The asymptotic variance is
equal to the optimal variance $\sigma_*^2$ and our estimator is
asymptotically optimal. In Section \ref{snum}, we confirm the efficiency
of our algorithm by numerical experiments. We first deal with a toy
example before considering the pricing of an arithmetic average Asian
option in the Black-Scholes model.

Another stratified sampling algorithm in which the optimal proportions
and the conditional expectations are estimated using the same drawings
has been very recently proposed in \cite{cann} for quantile estimation. More precisely, for a
total number of drawings equal to $N$, the authors suggest to allocate
the $N^\gamma$ with $0<\gamma<1$ first ones proportionally to the probabilities of the
strata and then use the estimation of the optimal proportions obtained
from these first drawings to allocate the $N-N^\gamma$ remaining
ones. Their stratified estimator is also asymptotically normal with
asymptotic variance equal to the optimal one. In practice, $N$ is finite
and it is better to take
advantage of all the drawings and not only the $N^\gamma$ first ones to modify adaptively the allocation between the
strata. Our algorithm works in this spirit.


\section{The algorithm}
\label{sec_algo}

The construction of the adaptive stratified estimator relies on steps at which we estimate
the conditional variances and compute the allocations. We denote by
$N^k$ the total number of drawings made in all the strata up to the end
of step~$k$. By convention, we set $N_0=0$. In order to be able to make
one drawing in each stratum at each step we assume that $N^k-N^{k-1}\geq I$
for all $k\geq 1$.

For all $1\leq i\leq I$ we denote by $N_i^k$ the number of drawings in stratum $i$
till the end of step $k$ with convention $N_i^0=0$. The increments $M_i^k=N_i^k-N^{k-1}_i$'s are computed at the beginning of
step $k$ using the information contained in the $N^{k-1}$ first drawings. 

\vspace{0.5cm}
{\bf STEP $k\geq 1$.} 
\vspace{0.1cm}

{\it Computation of the empirical variances.}

If $k>1$, for all $1\leq i\leq I$ compute
$$\widehat{\sigma}_i^{k-1}=\sqrt{\frac{1}{N_i^{k-1}}
\Big(\sum_{j=1}^{N_i^{k-1}}(f(X_i^j))^2
-\big(\frac{1}{N_i^{k-1}}\sum_{j=1}^{N_i^{k-1}}f(X_i^j)\big)^2\Big)}.$$
If $k=1$, set $\widehat{\sigma}_i^{0}=1$ for $1\leq i\leq I$.

\vspace{0.3cm}

{\it Computation of the allocations $M_i^k=N_i^k-N^{k-1}_i$.}

We make at least one drawing in each stratum. This ensures the convergence of the estimator and of the $\widehat{\sigma}_i^k$'s (see the proof of Proposition \ref{conv} below).

That is to say we have,
\begin{equation}
\label{p1}
\forall \,1\leq i\leq I,\quad M_i^k=1+\tilde{m}_i^k, \quad\mathrm{with}\quad\tilde{m}_i^k\in\m{N},
\end{equation}
 and we now seek the
$\tilde{m}_i^k$'s. We have $\sum_{i=1}^I\tilde{m}_i^k=N^k-N^{k-1}-I$, and possibly
$\tilde{m}_i^k=0$ for some indexes. 

\vspace{0.2cm}
We present two possible ways to compute the $\tilde{m}_i^k$'s.

\vspace{0.2cm}

{\it a)}   We know that the optimal proportion of total drawings in stratum $i$ for the stratified estimator is 
$q_i^*=\frac{p_i\sigma_i}{\sum_{j=1}^Ip_j\sigma_j}$, so we may want to choose the vector
$(\tilde{m}_1^k,\hdots,\tilde{m}_I^k)\in \m{N}^I$ close to
$({m}_1^k,\hdots,{m}_I^k)\in \m{R}_+^I$ defined by $$m_i^k=\dfrac{p_i\widehat{\sigma}_i^{k-1}}{\sum_{j=1}^Ip_j\widehat{\sigma}_j^{k-1}}
(N^k-N^{k-1}-I)\mbox{ for }1\leq i\leq I.$$
This can be achieved by setting 
$$\tilde{m}_i^k=\lfloor m_1^k+\ldots+m_i^k\rfloor-\lfloor m_1^k+\ldots+m_{i-1}^k\rfloor,$$
with the convention that the second term is zero for $i=1$. This
systematic sampling procedure ensures that
$\sum_{i=1}^I\tilde{m}^k_i=N^k-N^{k-1}-I$ and $m^k_i-1<\tilde{m}^k_i< m_i^k+1$
for all $1\leq i\leq I$.
In case $\widehat{\sigma}_i^{k-1}=0$ for all $1\leq i\leq
I$, the above definition of $m_i^{k}$ does not make sense and we set $m_i^k=p_i(N^k-N^{k-1}-I)$ for $1\leq i\leq
I$ before applying the systematic sampling procedure. Note that thanks to $(H)$ and the convergence of the 
$\widehat{\sigma}_i^k$ (see Proposition \ref{conv} below), this
asymptotically will never be the case.

\vspace{0.2cm}

{\it b)} 
In case $\widehat{\sigma}_i^{k-1}=0$ for all $1\leq i\leq
I$, we do as before. Otherwise, we may think to the expression of the variance of the stratified estimator 
with allocation $N_i$ for all $i$, which is given by \eqref{var_est_strat},
and find $(m_1^k,\ldots,m_I^k)\in\m{R}_+^I$ that minimizes
$$\sum_{i=1}^I\dfrac{p_i^2(\widehat{\sigma}_i^{k-1})^2}{N_i^{k-1}+1+m_i^k},$$
under the constraint $\sum_{i=1}^Im_i^k=N^k-N^{k-1}-I$.

This can be done in the following manner (see in the Appendix Proposition~\ref{prop-opt}):

\vspace{0.3cm}

\vspace{0.1cm}
For the indexes $i$ such that
$\widehat{\sigma}_i^{k-1}=0$, we set $m_i^k=0$.

We denote $I^k$ the number of indexes such that $\widehat{\sigma}_i^{k-1}>0$. We renumber
the corresponding strata from $1$ to $I^k$. We now find 
$(m_1^k,\ldots,m_{I^k}^k)\in\m{R}_+^{I^k}$ that minimizes
$\sum_{i=1}^{I^k} \frac{p_i^2(\widehat{\sigma}_i^{k-1})^2}{N_i^{k-1}+1+m_i^k}$,
under the constraint $\sum_{i=1}^{I^k}m_i^k=N^k-N^{k-1}-I$, by applying 
the three following points:
\vspace{0.1cm}

i) Compute the quantities $\frac{N_i^{k-1}+1}{p_i\widehat{\sigma}_i^{k-1}}$ and sort them in decreasing order. Denote by $\frac{N_{(i)}^{k-1}+1}{p_{(i)}\widehat{\sigma}_{(i)}^{k-1}}$ the ordered quantities.

 ii) For $i=1,\ldots,I^k$ compute the quantities
$$\frac{\displaystyle N^k-N^{k-1}-I+\sum_{j=i+1}^{I^k}(N^{k-1}_{(j)}+1)}{\displaystyle \sum_{j=i+1}^{I^k}p_{(j)}\widehat{\sigma}^{k-1}_{(j)}}.$$
Denote by $i^*$ the last $i$ such that
$$\dfrac{N_{(i)}^{k-1}+1}{p_{(i)}\widehat{\sigma}_{(i)}^{k-1}}\geq \frac{\displaystyle N^k-N^{k-1}-I+\sum_{j=i+1}^{I^k}(N^{k-1}_{(j)}+1)}{\displaystyle \sum_{j=i+1}^{I^k}p_{(j)}\widehat{\sigma}_{(j)}^{k-1}}.$$

If this inequality is false for all $i$, then by convention $i^*=0$. 

iii) Then for $i\leq i^*$ set $m^k_{(i)}=0$ and for $i>i^*$,
$$m^k_{(i)}=p_{(i)}\widehat{\sigma}_{(i)}^{k-1}.\frac{\displaystyle N^k-N^{k-1}-I+\sum_{j=i^*+1}^{I^k}(N^{k-1}_{(j)}+1)}
{\displaystyle\sum_{j=i^*+1}^{I^k}p_{(j)}
\widehat{\sigma}_{(j)}^{k-1}}-N^{k-1}_{(i)}-1.$$
This quantity is non-negative according to the proof of Proposition \ref{prop-opt}.

\vspace{0.2cm}
We then build $(m_1^k,\ldots,m_I^k)$ by reincluding the $I-I^k$ zero valued $m_i^k$'s
and using the initial indexation.
Finally we deduce $(\tilde{m}_1^k,\ldots,\tilde{m}_I^k)\in\m{N}^I$ by
the systematic sampling procedure described in {\it a)}.

\vspace{0.2cm}


{\it Drawings of the $X_i$'s.}
 Draw $M_i^k$ i.i.d. realizations of $X_i$ in each stratum $i$ and set
 $N_i^k=N_i^{k-1}+M_i^k$.

\vspace{0.3cm}
{\it Computation of the estimator} 

Compute
\begin{equation}
\label{val_est}
\hat{c}^k:=\sum_{i=1}^I\frac{p_i}{N_i^k}\sum_{j=1}^{N_i^k}f(X_i^j).
\end{equation}

\vspace{0.3cm}
Square integrability of $f(X)$ is not necessary in order to ensure that
the estimator $\widehat{c}^k$ is strongly consistent. Indeed thanks to
\eqref{p1}, we have $N_i^k\to\infty$ as $k\to\infty$ and the strong law
of large numbers ensures the following Proposition.

\begin{prop}
\label{conv}
If $\m{esp}|f(X)|<+\infty$, then 
$$\widehat{c}^k\xrightarrow[k\to\infty]{}c\quad\mathrm{a.s..}$$
If moreover, $\m{esp}(f^2(X))<+\infty$, then a.s., 
$$\forall 1\leq i\leq
I,\;\widehat{\sigma}_i^k\xrightarrow[k\to\infty]{}\sigma_i\;\;\mbox{
  and }\;\;\sum_{i=1}^I p_i\widehat{\sigma}_i^k \xrightarrow[k\to\infty]{}\sigma_*.$$
\end{prop}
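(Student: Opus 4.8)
The plan is to reduce everything to the classical strong law of large numbers applied within each stratum. The key structural fact, guaranteed by \eqref{p1}, is that at every step we force at least one new drawing into each stratum, so $N_i^k \geq k$ for all $i$, and in particular $N_i^k \to \infty$ as $k \to \infty$. Since for each fixed $i$ the drawings $X_i^1, X_i^2, \ldots$ are i.i.d.\ with the law of $X_i$ and are drawn independently across strata, the order in which the algorithm decides to stop drawing in a given stratum at a given step does not affect the fact that $(X_i^j)_{j\geq 1}$ is a genuine i.i.d.\ sequence: the allocation $M_i^k$ is measurable with respect to the past drawings, but this is a bounded stopping-time-type index, not a biased subsampling, so no issue arises. (If one wants to be careful, one can note that $N_i^k$ is a stopping time for the natural filtration enlarged by the future of the other strata, and invoke the SLLN along a deterministic-to-infinity sequence of indices.)

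First I would prove the first assertion. Write $\widehat{c}^k = \sum_{i=1}^I p_i \cdot \frac{1}{N_i^k}\sum_{j=1}^{N_i^k} f(X_i^j)$. For each $i$, the SLLN (valid under $\m{esp}|f(X_i)| < \infty$, which follows from $\m{esp}|f(X)| < \infty$ since $p_i > 0$ and $p_i\,\m{esp}|f(X_i)| \leq \m{esp}|f(X)|$) gives $\frac{1}{n}\sum_{j=1}^n f(X_i^j) \to \m{esp}(f(X_i))$ a.s.\ as $n \to \infty$; composing with $N_i^k \to \infty$ yields $\frac{1}{N_i^k}\sum_{j=1}^{N_i^k} f(X_i^j) \to \m{esp}(f(X_i))$ a.s. Summing the finitely many strata with weights $p_i$ gives $\widehat{c}^k \to \sum_{i=1}^I p_i\,\m{esp}(f(X_i)) = \m{esp}(f(X)) = c$ a.s.

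Next I would prove the second assertion under $\m{esp}(f^2(X)) < \infty$. Then each $\m{esp}(f^2(X_i)) < \infty$ as well, so by the SLLN applied to the sequences $(f(X_i^j))_j$ and $(f^2(X_i^j))_j$ together with $N_i^k \to \infty$, we get $\frac{1}{N_i^k}\sum_{j=1}^{N_i^k} f^2(X_i^j) \to \m{esp}(f^2(X_i))$ and $\frac{1}{N_i^k}\sum_{j=1}^{N_i^k} f(X_i^j) \to \m{esp}(f(X_i))$ a.s. Plugging these into the defining formula for $\widehat{\sigma}_i^k$ and using continuity of $(a,b)\mapsto \sqrt{a-b^2}$ on the relevant domain (the argument under the square root converges to $\m{esp}(f^2(X_i)) - \m{esp}^2(f(X_i)) = \sigma_i^2 \geq 0$), we obtain $\widehat{\sigma}_i^k \to \sigma_i$ a.s. Since there are finitely many strata, the exceptional null sets can be unioned, and $\sum_{i=1}^I p_i \widehat{\sigma}_i^k \to \sum_{i=1}^I p_i \sigma_i = \sigma_*$ a.s.\ follows immediately.

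The only real subtlety — and the step I would treat most carefully — is the composition of the SLLN with the random index $N_i^k$. This is legitimate because $N_i^k$ is $\mathcal F^{k}$-measurable where $\mathcal F^k$ denotes the information available at the end of step $k$, $N_i^k$ is nondecreasing in $k$, and $N_i^k \to \infty$ deterministically (indeed $N_i^k \geq k$); on the a.s.\ event where $\frac{1}{n}\sum_{j=1}^n f(X_i^j)$ converges, convergence along any subsequence tending to infinity is automatic, so no martingale or uniform-integrability argument is needed beyond the plain SLLN. This is exactly where hypothesis \eqref{p1} (``at least one drawing in each stratum at each step'') is used, as announced in the algorithm description.
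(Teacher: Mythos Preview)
Your proof is correct and follows exactly the approach the paper indicates: the paper's entire argument is the one-line remark just before the proposition that, thanks to \eqref{p1}, $N_i^k\to\infty$ and the strong law of large numbers applies in each stratum. Your write-up is in fact more detailed than the paper's, in particular your explicit check that $\m{esp}|f(X_i)|<\infty$ follows from $\m{esp}|f(X)|<\infty$ and $p_i>0$, and your discussion of why the random index $N_i^k$ causes no trouble (convergence along any subsequence tending to infinity on the full-measure event where the averages converge).
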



\section{Rate of convergence}
\label{sec-vit}

In this section we prove the following result.

\begin{thm}
\label{thm-vit-conv} 
Assume $(H)$, $\m{esp}(f^2(X))<+\infty$ and $k/N^k\to 0$ as $k\to\infty$.
Then, using either procedure {\it a)} or procedure {\it b)} for the computation of
allocations, one has
$$\sqrt{N^k}\big(\hat{c}^k-c\big)\xrightarrow[k\to\infty]{\mathrm{in law}}
\m{nor}(0,\sigma_*^2).$$
\end{thm}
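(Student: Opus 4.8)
The plan is to write $\sqrt{N^k}(\hat c^k-c)$ as a sum over strata and over the drawings within each stratum of centered increments, and to apply a central limit theorem for (arrays of) martingale-difference sequences or, more directly, to reduce to the case of a deterministic allocation with proportions converging to $q_i^*$. The natural object to control is, for each $i$, the partial-sum process $S_i^n=\sum_{j=1}^n\big(f(X_i^j)-\m{esp}(f(X_i))\big)$; by Donsker-type / Skorokhod-embedding arguments (or simply by the functional CLT) the rescaled processes $S_i^{\lfloor Nt\rfloor}/\sqrt N$ converge to independent Brownian motions with variances $\sigma_i^2$, jointly in $i$ since the drawings in distinct strata are independent. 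The estimator satisfies $\hat c^k-c=\sum_{i=1}^I\frac{p_i}{N_i^k}S_i^{N_i^k}$, so $\sqrt{N^k}(\hat c^k-c)=\sum_{i=1}^I p_i\sqrt{\frac{N^k}{N_i^k}}\cdot\frac{S_i^{N_i^k}}{\sqrt{N_i^k}}$. The key point is then that $N_i^k/N^k\to q_i^*$ a.s.: this follows from Proposition \ref{conv} (the $\widehat\sigma_i^k$ converge to $\sigma_i$, hence the target proportions $m_i^k/(N^k-N^{k-1}-I)$ converge to $q_i^*$), together with the constraint $\sum_i\tilde m_i^k=N^k-N^{k-1}-I$, the sandwich $m_i^k-1<\tilde m_i^k<m_i^k+1$ from the systematic sampling procedure, the floor $M_i^k\ge 1$, and a Cesàro/Toeplitz summation argument using the hypothesis $k/N^k\to 0$ (which guarantees that the accumulated $O(1)$-per-step errors $k$ are negligible compared to $N^k$). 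For procedure \textit{b)} one additionally checks that the minimizing allocation, characterized in Proposition \ref{prop-opt}, also has proportions converging to $q_i^*$ when the $\widehat\sigma_i^{k-1}$ converge to the $\sigma_i$ and $N_i^{k-1}/N^{k-1}$ stays bounded away from $0$; asymptotically the threshold index $i^*$ becomes $0$ and the formula in iii) reduces to $m_{(i)}^k\approx q_{(i)}^* (N^k-N^{k-1})$.

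Given $N_i^k/N^k\to q_i^*$ a.s., I would conclude by a random-time-change argument: since $N_i^k\to\infty$, the a.s. functional CLT for $S_i^{\lfloor\cdot\rfloor}$ evaluated at the (random, but a.s.\ convergent after rescaling) times $N_i^k$ gives $S_i^{N_i^k}/\sqrt{N_i^k}\xrightarrow{\text{in law}}\m{nor}(0,\sigma_i^2)$ jointly over $i$ with independent limits, so that $\sum_{i=1}^I p_i\sqrt{N^k/N_i^k}\cdot S_i^{N_i^k}/\sqrt{N_i^k}$ converges in law to $\sum_{i=1}^I\frac{p_i}{\sqrt{q_i^*}}\m{nor}(0,\sigma_i^2)=\m{nor}\big(0,\sum_i \frac{p_i^2\sigma_i^2}{q_i^*}\big)=\m{nor}(0,\sigma_*^2)$, the last equality being exactly the optimality identity $\sum_i p_i^2\sigma_i^2/q_i^*=(\sum_i p_i\sigma_i)^2$. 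Care must be taken when some $\sigma_i=0$ (then $q_i^*=0$ and the corresponding term is treated separately: $S_i^{N_i^k}=0$ since $f(X_i)$ is a.s.\ constant, so it contributes nothing), and hypothesis $(H)$ ensures $\sigma_*^2>0$ so the limit is nondegenerate.

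The main obstacle is establishing the a.s.\ convergence $N_i^k/N^k\to q_i^*$ with enough uniformity to push it through the CLT. The subtlety is twofold: first, the allocation increments $M_i^k$ at step $k$ are measurable with respect to the first $N^{k-1}$ drawings, so there is a genuine feedback loop and one cannot simply invoke the SLLN with fixed weights — one must argue that the cumulative effect of the adaptivity is a vanishing perturbation, which is where $k/N^k\to 0$ enters decisively (it bounds both the number of "$+1$'s" and the number of rounding errors by $k=o(N^k)$). Second, proving that $S_i^{N_i^k}/\sqrt{N_i^k}$ is asymptotically normal when the stopping-type times $N_i^k$ are not stopping times for the single-stratum filtration requires either the Skorokhod embedding (embed $S_i$ into a Brownian motion and use continuity of Brownian motion at the a.s.-convergent rescaled times) or an Anscombe-type theorem; I expect the cleanest route is Skorokhod embedding applied simultaneously to all $I$ independent walks, combined with Slutsky's lemma to replace $\sqrt{N^k/N_i^k}$ by its deterministic limit $1/\sqrt{q_i^*}$.
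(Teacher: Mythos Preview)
Your overall architecture matches the paper's: split the theorem into (i) the a.s.\ convergence $N_i^k/N^k\to q_i^*$ and (ii) a CLT conditional on that convergence. For (i) under procedure~\textit{a)}, your Ces\`aro argument via the sandwich $m_i^k-1<\tilde m_i^k<m_i^k+1$ together with $k/N^k\to 0$ is exactly what the paper does.

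For step (ii) you take a genuinely different route. You propose a random-time-change argument (Skorokhod embedding or an Anscombe-type theorem) applied to the independent partial-sum processes $S_i^n$. This is valid, provided you handle the fact that the random times $(N_i^k)_i$ differ across strata and depend on all strata jointly; the Skorokhod route does this cleanly by embedding the $I$ independent walks into $I$ independent Brownian motions and using $T_i^{N_i^k}/N^k\to q_i^*$ a.s.\ together with the Brownian modulus of continuity. The paper instead introduces a ``microscopic'' time scale, distributing the $N^k-N^{k-1}$ drawings of step $k$ one by one via an auxiliary allocation $(\nu_i^n)$ with $\nu_i^{N^k}=N_i^k$, and applies a martingale CLT to the resulting sequence whose increments involve a single draw. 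The reason for that detour is that Lindeberg's condition for the step-$k$ martingale cannot be checked without a growth assumption on $N^k-N^{k-1}$; at the microscopic scale Lindeberg becomes trivial, but one must then prove $\nu_i^n/n\to q_i^*$, an elementary though nontrivial combinatorial lemma. Your approach bypasses this auxiliary construction at the cost of invoking Skorokhod embedding; both are legitimate.

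There is, however, a real gap in your treatment of procedure~\textit{b)}. You assert that ``asymptotically the threshold index $i^*$ becomes $0$'' and that the formula in iii) reduces to $m_{(i)}^k\approx q_{(i)}^*(N^k-N^{k-1})$. Neither is justified, and the first can fail: whether $i^*=0$ at step $k$ depends on the relative sizes of the $N_i^{k-1}$ and of $N^k-N^{k-1}$, and nothing in the hypotheses prevents $N^k-N^{k-1}$ from remaining of order $I$. Moreover, your side assumption that ``$N_i^{k-1}/N^{k-1}$ stays bounded away from $0$'' is precisely the conclusion you are after, so the argument is circular. The paper does not try to show $i^*=0$. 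Instead it proves, by case analysis on $i\le i_k^*$ versus $i>i_k^*$ (and, in the latter case, a further dichotomy), the uniform a~priori bound
\[
\frac{N_i^{k+1}}{N^{k+1}}\le\frac{N_i^k+1}{N^{k+1}}\vee\Big(\rho_i^k+\frac{1}{N^{k+1}}\Big),
\]
where $\rho_i^k=p_i\widehat\sigma_i^k/\sum_jp_j\widehat\sigma_j^k$. From this, an induction yields $\limsup_k N_i^k/N^k\le q_i^*$ for every $i$, and since the proportions sum to $1$ the matching $\liminf$ follows. That bound is the missing piece in your sketch for procedure~\textit{b)}.
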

With Proposition \ref{conv}, one deduces that $\frac{\sqrt{N^k}}{\sum_{i=1}^I p_i\widehat{\sigma}_i^k}\big(\hat{c}^k-c\big)\xrightarrow[k\to\infty]{\mathrm{in law}}
\m{nor}(0,1)$, which enables the easy construction of confidence intervals.
The theorem is a direct consequence of the two following
propositions.

\begin{prop}
\label{prop-vit-conv} 
If $\m{esp}(f^2(X))<+\infty$ and \begin{equation}
\label{nikq}
\forall 1\leq i\leq I,\quad \frac{N_i^k}{N^k}\xrightarrow[k\to\infty]{}q_i^*\;\;\mathrm{a.s.},
\end{equation}
then 
$$\sqrt{N^k}\big(\hat{c}^k-c\big)\xrightarrow[k\to\infty]{\mathrm{in law}}
\m{nor}(0,\sigma_*^2).$$
\end{prop}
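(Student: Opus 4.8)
The plan is to decompose $\sqrt{N^k}(\widehat c^k-c)$ into a sum of martingale-type increments across strata and apply a central limit theorem for arrays with random normalisations. First I would write
$$\sqrt{N^k}(\widehat c^k-c)=\sum_{i=1}^I p_i\sqrt{N^k}\left(\frac{1}{N_i^k}\sum_{j=1}^{N_i^k}\big(f(X_i^j)-\m{esp}(f(X_i))\big)\right)=\sum_{i=1}^I p_i\sqrt{\frac{N^k}{N_i^k}}\cdot\frac{1}{\sqrt{N_i^k}}S_i^{N_i^k},$$
where $S_i^n=\sum_{j=1}^n (f(X_i^j)-\m{esp}(f(X_i)))$. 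By hypothesis \eqref{nikq}, $N^k/N_i^k\to 1/q_i^*$ a.s.\ (for those $i$ with $\sigma_i>0$; strata with $\sigma_i=0$ contribute nothing since then $f(X_i)$ is a.s.\ constant and $S_i^n\equiv 0$). So the result will follow once one shows $\frac{1}{\sqrt{N_i^k}}S_i^{N_i^k}$ converges in law to $\m{nor}(0,\sigma_i^2)$ jointly, in a way compatible with the scalar coefficients $p_i/\sqrt{q_i^*}$, giving asymptotic variance $\sum_i p_i^2\sigma_i^2/q_i^*=(\sum_i p_i\sigma_i)^2=\sigma_*^2$.

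The subtlety is that $N_i^k$ is a random index that is \emph{not} a stopping time with respect to the natural filtration of the single sequence $(X_i^j)_j$, because the allocation $M_i^k$ depends on drawings in \emph{all} strata. The clean way around this is to build a single filtration $(\mathcal G_k)$ recording everything known at the end of step $k$, and to note that the $X_i^j$'s are drawn independently with the allocation increments $M_i^k$ being $\mathcal G_{k-1}$-measurable. Then $\widehat c^k N_i^k - c N_i^k$, reorganised stratum by stratum and step by step, is a sum of martingale increments $p_i\sum_{j=N_i^{k-1}+1}^{N_i^k}(f(X_i^j)-\m{esp}(f(X_i)))$ with respect to $(\mathcal G_k)$. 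I would therefore invoke a martingale central limit theorem (e.g.\ the Lindeberg-type CLT for triangular arrays of martingale differences, as in Hall--Heyde), after the random time change induced by $k\mapsto N^k$. The bracket of this martingale, normalised by $N^k$, is $\frac{1}{N^k}\sum_{i=1}^I p_i^2\,\widehat{\mathrm{var}}$-type terms which converge to $\sum_i p_i^2\sigma_i^2/q_i^*=\sigma_*^2$ by \eqref{nikq} and the strong law of large numbers (the conditional second moments along each stratum converge to $\sigma_i^2$); the Lindeberg condition follows from $\m{esp}(f^2(X))<\infty$ together with the fact that $\max_i M_i^k/N^k\to 0$, which itself is a consequence of \eqref{nikq} and $N^k\to\infty$.

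The main obstacle I expect is precisely the verification that one may replace the deterministic-time CLT increment $\frac{1}{\sqrt n}S_i^n$ evaluated at the random, non-adapted index $n=N_i^k$ by a genuine martingale-array statement — i.e.\ setting up the right filtration so that the centred increments over a step are martingale differences, and checking that the negligible-jump (Lindeberg) and bracket-convergence hypotheses hold with the random normalisation $N^k$. Once that bookkeeping is in place, the identity $\sum_{i=1}^I p_i^2\sigma_i^2/q_i^* = \sigma_*^2$ for $q_i^*=p_i\sigma_i/\sum_j p_j\sigma_j$ is immediate, and the conclusion $\sqrt{N^k}(\widehat c^k-c)\xrightarrow{\text{in law}}\m{nor}(0,\sigma_*^2)$ follows. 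An alternative, perhaps lighter, route would be to condition on the sequence of allocations: given $(\mathcal G_k)_k$ the random variables $\sum_j f(X_i^j)$ are sums of independent terms, so one could apply a conditional Lindeberg--Feller CLT and then remove the conditioning using that the relevant deterministic quantities converge a.s.\ by \eqref{nikq}; but the martingale formulation handles the dependence between strata more transparently, so that is the approach I would carry out.
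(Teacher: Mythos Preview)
Your plan is precisely the route the paper writes down first and then \emph{rejects}. With $\mu_k=(S_1^{N_1^k},\ldots,S_I^{N_I^k})'$ and the step filtration $\mathcal G_k=\sigma(\m{ind}_{j\leq N_i^k}X_i^j)$, the bracket $\langle\mu\rangle_k/N^k=\mathrm{diag}(N_i^k\sigma_i^2/N^k)\to\mathrm{diag}(q_i^*\sigma_i^2)$ goes through just as you say, but Lindeberg does not. Your key claim, that $\max_i M_i^k/N^k\to 0$ follows from \eqref{nikq} and $N^k\to\infty$, is false: the proposition imposes no constraint whatsoever on the growth of $N^k-N^{k-1}$. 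Take $N^k=2^k$ and any allocation with $N_i^k/N^k\to q_i^*$; then $M_i^k/N^k\to q_i^*/2\neq 0$. In this regime the single term $l=k$ of the Lindeberg sum already fails to vanish: $\|\mu_k-\mu_{k-1}\|^2$ is built from $N^k-N^{k-1}\sim N^k/2$ centred summands, so $\|\mu_k-\mu_{k-1}\|^2/N^k$ has a nondegenerate limiting law and the truncated conditional expectation does not tend to zero. The paper states this obstruction explicitly.

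The paper's remedy is to interpolate to a \emph{microscopic} time scale. One constructs integer processes $(\nu_i^n)_{n\geq 0}$ with $\sum_i\nu_i^n=n$, $\nu_i^n-\nu_i^{n-1}\in\{0,1\}$ and $\nu_i^{N^k}=N_i^k$, via an explicit deterministic rule that at each $n\in(N^{k-1},N^k]$ uses only the already $\mathcal G_{N^{k-1}}$-measurable targets $N_i^k$. Setting $\tilde c^n=\sum_i\frac{p_i}{\nu_i^n}\sum_{j\leq\nu_i^n}f(X_i^j)$ one has $\hat c^k=\tilde c^{N^k}$, and the vector martingale $(\sum_{j\leq\nu_i^n}(f(X_i^j)-\m{esp}f(X_i)))_i$ with respect to $\m{F}_n=\sigma(\m{ind}_{j\leq\nu_i^n}X_i^j)$ now has increments consisting of exactly \emph{one} draw. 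Lindeberg then reduces to $\sup_i\m{esp}\big[|f(X_i)-\m{esp}f(X_i)|^2\m{ind}_{\{|f(X_i)-\m{esp}f(X_i)|>\varepsilon\sqrt n\}}\big]\to 0$, which is immediate from square-integrability. The work shifts to the bracket condition, which now needs $\nu_i^n/n\to q_i^*$; this is proved as a separate lemma from \eqref{nikq} and the interpolation rule. Slutsky then gives $\sqrt n(\tilde c^n-c)\Rightarrow\m{nor}(0,\sigma_*^2)$, and the proposition follows along the subsequence $n=N^k$. Your conditioning alternative is not an obvious escape either: the allocation path $(N_i^k)_{i,k}$ is a function of the very draws that enter $\hat c^k$, so conditioning on it alters their law and you are back to a dependence problem.
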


\begin{prop}
\label{convnink}
Under the assumptions of Theorem \ref{thm-vit-conv}, using either procedure {\it a)} or procedure {\it b)} for the computation of
allocations,
\eqref{nikq} holds.

\end{prop}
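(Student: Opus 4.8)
The plan is to show that the empirical proportions $N_i^k/N^k$ converge a.s.\ to $q_i^*$ by combining two ingredients: the a.s.\ convergence of the empirical conditional standard deviations $\widehat\sigma_i^k\to\sigma_i$ from Proposition \ref{conv}, and a careful bookkeeping of how the allocation rules {\it a)} and {\it b)} translate that convergence into convergence of the cumulative allocations. The key point to exploit is that at each step $k$ exactly $N^k-N^{k-1}-I$ ``free'' drawings are distributed across the strata (plus the $I$ forced drawings), and both procedures choose the increments $\tilde m_i^k$ to be within distance $1$ of a target $m_i^k$ which is a continuous function of the current estimates $\widehat\sigma_i^{k-1}$ and of the ratios $N_i^{k-1}/(N^k-N^{k-1})$. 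Since $k/N^k\to 0$, the total number of forced drawings $kI$ is negligible compared with $N^k$, so the forced drawings do not disturb the limiting proportions and it suffices to control the free ones.

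First I would treat procedure {\it a)}, which is the simpler case. Here $m_i^k$ is exactly $\widehat r_i^{k-1}(N^k-N^{k-1}-I)$ where $\widehat r_i^{k-1}=p_i\widehat\sigma_i^{k-1}/\sum_j p_j\widehat\sigma_j^{k-1}$ (on the a.s.\ event where eventually some $\widehat\sigma_i^{k-1}>0$, which holds by $(H)$ and Proposition \ref{conv}). By Proposition \ref{conv}, $\widehat r_i^{k-1}\to q_i^*$ a.s. The systematic sampling bound $m_i^k-1<\tilde m_i^k<m_i^k+1$ gives $M_i^k=\widehat r_i^{k-1}(N^k-N^{k-1}-I)+O(1)+1$, and summing over steps,
$$
N_i^k=\sum_{\ell=1}^k \widehat r_i^{\ell-1}(N^\ell-N^{\ell-1}-I)+O(k).
$$
Dividing by $N^k$ and using $k/N^k\to 0$, it remains to see that the Cesàro-type average $\frac1{N^k}\sum_{\ell=1}^k \widehat r_i^{\ell-1}(N^\ell-N^{\ell-1})$ converges to $q_i^*$; this follows from a Toeplitz/Kronecker lemma argument applied with the weights $w_\ell=(N^\ell-N^{\ell-1})/N^k$, using that $\widehat r_i^{\ell-1}\to q_i^*$ and $\sum_{\ell=1}^k(N^\ell-N^{\ell-1})=N^k$. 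Since $q_i^*>0$ exactly when $\sigma_i>0$, we also get $N_i^k\to\infty$ with the right rate; for strata with $\sigma_i=0$ the forced-plus-rounding contributions are $O(k)=o(N^k)$, consistent with $q_i^*=0$.

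For procedure {\it b)}, the target $(m_i^k)$ solves the finite-dimensional convex minimization of $\sum_i p_i^2(\widehat\sigma_i^{k-1})^2/(N_i^{k-1}+1+m_i^k)$ under the sum constraint, whose explicit solution is the water-filling formula in steps i)--iii). Here I would argue that, on the a.s.\ event of Proposition \ref{conv} and using $k/N^k\to 0$, the ``already allocated'' terms $N_i^{k-1}+1$ are asymptotically negligible relative to the increment budget $N^k-N^{k-1}-I$ along a suitable argument — or, more robustly, I would show directly that the only fixed point of the allocation dynamics compatible with $\widehat\sigma_i^{k-1}\to\sigma_i$ is $N_i^k/N^k\to q_i^*$: any limit point $q_i$ of $N_i^k/N^k$ must, by passing to the limit in the KKT/water-filling characterization (the thresholds $i^*$ stabilize because the sorted quantities $(N_{(i)}^{k-1}+1)/(p_{(i)}\widehat\sigma_{(i)}^{k-1})$ and the comparison ratios have well-defined limits), satisfy the first-order condition that $p_i\sigma_i/q_i$ is constant over the active strata, which forces $q_i=q_i^*$. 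The main obstacle is precisely this step: making rigorous the convergence of the water-filling solution, since the thresholds $i^*$ and the index reordering are discontinuous in the data. I would handle it by working on the a.s.\ event where all the $\sigma_i$ with $\sigma_i>0$ are distinct-or-not but the ratios $p_i\sigma_i$ are fixed, showing that for $k$ large the active set of {\it b)} eventually contains all strata with $\sigma_i>0$ (because the marginal cost of a stratum with $\sigma_i>0$ and bounded allocation blows up), and then the explicit formula in iii) reduces, after dividing by $N^k-N^{k-1}$ and using $N_i^{k-1}/(N^k-N^{k-1})$ bounds together with $k/N^k\to0$, to the same Cesàro argument as in {\it a)}. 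Finally, for strata with $\sigma_i=0$ procedure {\it b)} allocates zero free drawings, so $N_i^k=k=o(N^k)$, matching $q_i^*=0$, which completes the proof.
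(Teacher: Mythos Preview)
For procedure {\it a)} your argument is essentially the paper's: both use the convergence $\widehat r_i^{\,l-1}\to q_i^*$ from Proposition~\ref{conv}, the systematic-sampling bounds $|\tilde m_i^l-m_i^l|<1$, and a Ces\`aro/Toeplitz average with weights $(N^l-N^{l-1})/N^k$, together with $k/N^k\to 0$ to kill the forced drawings and rounding errors.

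For procedure {\it b)} there is a genuine gap. Your first route---``$N_i^{k-1}+1$ is asymptotically negligible relative to $N^k-N^{k-1}-I$''---is false under the stated hypotheses: nothing forbids $N^k-N^{k-1}=o(N^{k-1})$ (take $N^k=k^2$, for instance). Your second route does not close either. The water-filling formula in Point iii) pins down the \emph{cumulative} targets $n_i^{k+1}=N_i^k+1+m_i^{k+1}$ on the active set, not the increments as a fixed proportion of the step budget; the explicit expression for $m_i^k$ always carries the correction $-N_i^{k-1}-1$, so it does not reduce to $\widehat r_i^{\,k-1}(N^k-N^{k-1})$ and the Ces\`aro argument of {\it a)} is unavailable. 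Nor do you justify that the active set is eventually full: a stratum over-allocated by an early bad estimate of $\widehat\sigma_i$ can remain inactive for many steps, and since the budget $N^k-N^{k-1}-I$ may be small compared with $N^k$, there is no reason for $i^*_k$ to stabilize at $0$. Your remark that ``the marginal cost blows up if the allocation is bounded'' only rules out $N_i^k$ bounded, not $N_i^k/N^k$ converging to a wrong positive limit.

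The paper sidesteps all of this with a one-step, one-sided inequality valid for every $k\ge K'$ and every stratum with $\sigma_i>0$:
\[
\frac{N_i^{k+1}}{N^{k+1}}\ \le\ \frac{N_i^k+1}{N^{k+1}}\ \vee\ \Big(\rho_i^k+\frac{1}{N^{k+1}}\Big),
\]
proved directly from the water-filling formula by distinguishing whether the stratum lies in the active set (then $n_i^{k+1}\le \rho_i^k N^{k+1}$) or not (then $m_i^{k+1}=0$). Iterating this bound and using $\rho_i^k\to q_i^*$ and $k/N^k\to 0$ yields $\limsup_k N_i^k/N^k\le q_i^*$; the matching lower bound then follows from $\sum_i N_i^k/N^k=1=\sum_i q_i^*$. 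No stabilization of the active set and no Ces\`aro step are required.
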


We prove Proposition \ref{prop-vit-conv} and \ref{convnink} in the following subsections.

\subsection{Proof of Proposition \ref{prop-vit-conv}}

The main tool of the proof of this proposition will be a CLT for martingales that we recall below.

\begin{thm}[Central Limit Theorem]
\label{clt} 
Let $(\mu_n)_{n\in\m{N}}$ be a square-integrable $(\m{F}_n)_{n\in\m{N}}$-vector martingale. Suppose that for a deterministic sequence $(\gamma_n)$ increasing to $+\infty$ we have,

i) 
$$\dfrac{\langle\mu\rangle_n}{\gamma_n}\xrightarrow[n\to\infty]{\m{P}}\Gamma.$$

ii) The Lindeberg condition is satisfied, i.e. for all $\varepsilon>0$
$$\dfrac{1}{\gamma_n}\sum_{k=1}^n
\m{esp}\Big[||\mu_k-\mu_{k-1}||^2
\m{ind}_{\{||\mu_k-\mu_{k-1}||\geq\varepsilon\sqrt{\gamma_n}\}}|\m{F}_{k-1}\Big]
\xrightarrow[n\to\infty]{\m{P}}0.
$$

Then 
$$\dfrac{\mu_n}{\sqrt{\gamma_n}}\xrightarrow[n\to\infty]{\mathrm{in law}}\m{nor}(0,\Gamma).$$
\end{thm}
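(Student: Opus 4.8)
This is a classical martingale central limit theorem, and the plan is to establish it by the characteristic-function (Lévy) method after a Cramér--Wold reduction to dimension one. Here $\langle\mu\rangle_n=\sum_{k=1}^n\m{esp}[(\mu_k-\mu_{k-1})(\mu_k-\mu_{k-1})^{\top}\,|\,\m{F}_{k-1}]$ denotes the predictable matrix bracket. Fix $\theta\in\m{R}^d$ and set $Y_n=\theta^{\top}\mu_n$: this is a scalar square-integrable martingale with bracket $\langle Y\rangle_n=\theta^{\top}\langle\mu\rangle_n\theta$, so assumption i) gives $\langle Y\rangle_n/\gamma_n\xrightarrow[n\to\infty]{\m{P}}\theta^{\top}\Gamma\theta=:\eta^2$. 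Since $|Y_k-Y_{k-1}|\leq|\theta|\,||\mu_k-\mu_{k-1}||$, the Lindeberg condition ii) transfers directly to $(Y_n)$. As the laws $\m{nor}(0,\theta^{\top}\Gamma\theta)$, $\theta\in\m{R}^d$, are exactly the one-dimensional projections of $\m{nor}(0,\Gamma)$, it suffices by Cramér--Wold to prove the scalar statement $Y_n/\sqrt{\gamma_n}\xrightarrow[n\to\infty]{\mathrm{in law}}\m{nor}(0,\eta^2)$.

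For fixed $n$ I would introduce the martingale difference array $X_{n,k}=(Y_k-Y_{k-1})/\sqrt{\gamma_n}$, $1\leq k\leq n$, so that $\m{esp}[X_{n,k}\,|\,\m{F}_{k-1}]=0$, the predictable quadratic variation $\sum_{k=1}^n\m{esp}[X_{n,k}^2\,|\,\m{F}_{k-1}]=\langle Y\rangle_n/\gamma_n$ tends to $\eta^2$ in probability, and the Lindeberg sum tends to $0$ in probability. Writing $S_n=\sum_{k=1}^nX_{n,k}=Y_n/\sqrt{\gamma_n}$, the goal reduces to $\m{esp}[e^{itS_n}]\to e^{-\eta^2t^2/2}$ for each fixed $t\in\m{R}$. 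The device is to compare $e^{itS_n}$ with the product $T_n=\prod_{k=1}^n(1+itX_{n,k})$: conditioning one factor at a time and using the martingale-difference property gives $\m{esp}[T_n]=1$. Because $|1+itx|=\sqrt{1+t^2x^2}\geq1$, the ratio $R_n=e^{itS_n}/T_n=\prod_{k=1}^n e^{itX_{n,k}}/(1+itX_{n,k})$ satisfies $|R_n|\leq1$, and the expansion $itx-\log(1+itx)=-\tfrac{t^2x^2}{2}+O(|tx|^3)$ on the increments suggests $\log R_n\approx-\tfrac{t^2}{2}\sum_{k=1}^nX_{n,k}^2$. The Lindeberg condition controls the cubic remainder once the increments are uniformly small and lets me replace $\sum_kX_{n,k}^2$ by its predictable version $\sum_k\m{esp}[X_{n,k}^2\,|\,\m{F}_{k-1}]$, whose limit is the deterministic constant $\eta^2$; hence $R_n\xrightarrow[n\to\infty]{\m{P}}e^{-\eta^2t^2/2}$.

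The main obstacle is not this heuristic but the passage from convergence in probability of $R_n$ to convergence of the characteristic functions $\m{esp}[e^{itS_n}]=\m{esp}[T_nR_n]$: the factor $T_n$ (equivalently $R_n^{-1}$) is not bounded, so one cannot simply combine $\m{esp}[T_n]=1$ with the bounded limit of $R_n$. To repair this I would perform a double truncation. First, replace $X_{n,k}$ by $X_{n,k}\m{ind}_{\{|X_{n,k}|\leq\delta\}}$ recentred to restore the martingale-difference property; the Lindeberg condition shows the new array has the same conditional-variance limit, that the recentring and discarded parts are negligible, and now $\max_k|X_{n,k}|\leq\delta$. Second, localise the cumulative conditional variance $V_{n,k}=\sum_{j\leq k}\m{esp}[X_{n,j}^2\,|\,\m{F}_{j-1}]$ by the stopping time $\tau_n=\inf\{k:V_{n,k}>\eta^2+1\}$, which satisfies $\m{P}[\tau_n\leq n]\to0$ since $V_{n,n}\to\eta^2$; stopping forces $V_{n,\cdot}\leq\eta^2+1$, and through the mean-one martingale $Q_{n,k}=\prod_{j\leq k}(1+t^2X_{n,j}^2)/(1+t^2\m{esp}[X_{n,j}^2|\m{F}_{j-1}])$ this bounds $\m{esp}[|T_n|^2]$ uniformly in $n$. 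That uniform $L^2$ control supplies the uniform integrability turning $R_n^{-1}\xrightarrow{\m{P}}e^{\eta^2t^2/2}$ into $\m{esp}[e^{itS_n}(R_n^{-1}-e^{\eta^2t^2/2})]\to0$; combined with $\m{esp}[e^{itS_n}R_n^{-1}]=\m{esp}[T_n]=1$ and the fact that $e^{\eta^2t^2/2}$ is a \emph{constant}, this yields $\m{esp}[e^{itS_n}]\to e^{-\eta^2t^2/2}$. Removing the two truncations alters the array only on events of vanishing probability, so the limit persists for the original $(X_{n,k})$, and Cramér--Wold then delivers the stated vector convergence $\mu_n/\sqrt{\gamma_n}\xrightarrow[n\to\infty]{\mathrm{in law}}\m{nor}(0,\Gamma)$.
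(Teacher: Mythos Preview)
The paper does not actually prove this theorem: it is merely \emph{recalled} as a classical tool (``The main tool of the proof of this proposition will be a CLT for martingales that we recall below'') and then applied without proof in the argument for Proposition~\ref{thm_main}. So there is no proof in the paper to compare your proposal against.

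For what it is worth, your outline is the standard characteristic-function route to the martingale CLT: Cram\'er--Wold reduction to dimension one, comparison of $e^{itS_n}$ with the mean-one product $\prod_k(1+itX_{n,k})$, truncation of large increments using the Lindeberg condition, and a stopping argument on the cumulative predictable variance to obtain the uniform integrability needed to pass to the limit. This is essentially the approach found in classical references (e.g.\ Hall--Heyde), and the sketch is coherent, though several of the steps you describe (the replacement of $\sum_k X_{n,k}^2$ by its predictable compensator, the precise control of the cubic remainder after truncation, and the removal of the truncations at the end) are only indicated rather than carried out.
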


\vspace{0.6cm}
As we can write
$$\sqrt{N^k}\big(\hat{c}^k-c\big)=
\left(\begin{array}{c}
          p_1\frac{N^k}{N_1^k}\\
\vdots\\
p_I\frac{N^k}{N_I^k}\\
         \end{array}\right).\frac{\displaystyle 1}{\displaystyle\sqrt{N^k}}
\left(\begin{array}{c}
          \sum_{j=1}^{N_1^k}(f(X_1^j)-\m{esp}f(X_1))\\
\vdots\\
\sum_{j=1}^{N_I^k}(f(X_I^j)-\m{esp}f(X_I))\\
         \end{array}\right),
$$
we could think to set 
$\mu_k:=\Big(\sum_{j=1}^{N_1^k}(f(X_1^j)-\m{esp}f(X_1)),\ldots,\sum_{j=1}^{N_I^k}(f(X_I^j)-\m{esp}f(X_I))\Big)'$ and try to use Theorem \ref{clt}. Indeed if we define the filtration 
$(\mathcal{G}_k)_{k\in\m{N}}$ by 
$\mathcal{G}_k=\sigma(\m{ind}_{j\leq N_i^k}X_i^j,\,1\leq i\leq I,\,\, 1\leq j)$, it can be shown that
$(\mu_k)$ is a $(\mathcal{G}_k)$-martingale. This is thanks to the fact that the $N_i^k$'s
are $\mathcal{G}_{k-1}$-measurable. Then easy computations show that
$$\frac{1}{N^k}\langle\mu\rangle_k=
\mathrm{diag}\Big(\big(\frac{N_1^k}{N^k}\sigma_1^2,\ldots,\frac{N_I^k}{N^k}\sigma_I^2\big)\Big)$$
where $\mathrm{diag}(\mathbf{v})$ denotes the diagonal matrix with vector $\mathbf{v}$ on the diagonal. Thanks to \eqref{nikq} we thus have
$$\frac{1}{N^k}\langle\mu\rangle_k\xrightarrow[k\to\infty]{\mathrm{a.s.}}
\mathrm{diag}\Big(\big(q_1^*\sigma_1^2,\ldots,q_I^*\sigma_I^2\big)\Big),$$
and a use of Theorem \ref{clt} and Slutsky's theorem could lead to the desired result.

\vspace{0.1cm}
The trouble is that Lindeberg's condition cannot be verified in this context, and we will not be able to apply Theorem \ref{clt}. Indeed the quantity $||\mu_k-\mu_{k-1}||^2$ involves $N^k-N^{k-1}$
random variables of the type $X_i$ and we cannot control it without making some growth assumption on $N^k-N^{k-1}$.

\vspace{0.3cm}
In order to handle the problem, we are going to introduce a microscopic scale. From the sequence of estimators 
$(\hat{c}^k)$ we will build a sequence 
$(\tilde{c}^n)$ of estimators of $c$, such that $\hat{c}^k=\tilde{c}^{N^k}$, and for which we will show a CLT using
Theorem \ref{clt}. It will be possible because it involves a new martingale
$(\mu_n)$ such that $\mu_n-\mu_{n-1}$ is equal to a vector the only non zero coordinate of which is {\it one} random variable $f(X_i^j)$. Then the Lindeberg condition will be easily verified, but this time we will have to work a little more to check the bracket condition. As the sequence $(\hat{c}^k)$
is a subsequence of $(\tilde{c}^n)$, Proposition \ref{prop-vit-conv}
will follow. This is done in the following way.

\vspace{0.1cm}
Let $n\in\m{N}^*$. In the setting of the Algorithm of Section
 \ref{sec_algo} let $k\in\m{N}$ such that
$N^{k-1}< n\leq N^k$. Given the allocations $(N_i^{l})_{i=1}^I$, for $0\leq l\leq k$,
we define for each $1\leq i\leq I$ a quantity $\nu_i^n$ with the inductive rule below. Each
$\nu_i^n$ is the number of drawings in the $i$-th strata among the first $n$ drawings and we have $\sum_{i=1}^I\nu_i^n=n$.
  We then define 

\begin{equation*}
 \widetilde{c}^n:=\sum_{i=1}^I\frac{p_i}{\nu_i^n}\sum_{j=1}^{\nu_i^n}f(X_i^j).
\end{equation*}

\newpage
\begin{center}
 {\bf Rule for the $\nu_i^n$'s}
\end{center}

 For $n=0$, $\nu_i^n=0$, for all $1\leq i\leq I$.

\vspace{0.2cm}

\begin{center}
 \begin{enumerate}
 \item For $k>0$ set $r_i^k:=\frac{N_i^k-N_i^{k-1}}{N^k-N^{k-1}}$ for $1\leq i\leq I$.
\item For $N^{k-1}<n\leq N^k$, and given the $\nu_i^{n-1}$'s find
$$i_n=\argmax_{1\leq i\leq I}
\Big(r_i^k-\frac{\nu_i^{n-1}-N_i^{k-1}}{n-N^{k-1}}\Big).$$
If several $i$ realize the maximum choose $i_n$ to be the one for which $r_i^k$ is the greatest. If there are still ex aequo's choose the greatest $i$.
\item Set $\nu_{i_n}^n=\nu_{i_n}^{n-1}+1$, and $\nu_i^n=\nu_i^{n-1}$ if $i\neq i_n$.
\end{enumerate}
\end{center}

There is always an index $i$ for which $r_i^k-\frac{\nu_i^{n-1}-N_i^{k-1}}{n-N^{k-1}}>0$, since
$$\sum_{i=1}^I\frac{\nu_i^{n-1}-N_i^{k-1}}{n-N^{k-1}}=\frac{n-1-N^{k-1}}{n-N^{k-1}}
<1=\sum_{i=1}^Ir_i^k.$$

 Moreover, for the first  $n\in\{N^{k-1}+1,\hdots,N^k\}$ such that
$\nu_i^{n-1}=N_i^k$ in the $i$-th strata, $r_i^k-\frac{\nu_i^{n-1}-N_i^{k-1}}{n-N^{k-1}}\leq 0$ and
$\nu_i^{n'}=\nu_i^n=N_i^k$ for $n\leq n'\leq N^k$.

This implies that
$$\nu_i^{N^k}=N_i^k,\;\;\forall 1\leq i\leq I,\;\;\forall k\in\m{N},$$
and as a consequence,
\begin{equation}
\label{subseq}
\hat{c}^k=\tilde{c}^{N^k}.
\end{equation}
Therefore Proposition \ref{prop-vit-conv} is an easy consequence of the
following one.

\begin{prop}
\label{thm_main}
Under the assumptions of Proposition \ref{prop-vit-conv},
$$\sqrt{n}\big(\tilde{c}^n-c\big)\xrightarrow[n\to\infty]{\mathrm{in law}}
\m{nor}(0,\sigma_*^2).$$

\end{prop}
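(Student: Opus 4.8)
The plan is to apply the martingale CLT (Theorem \ref{clt}) to the $\m{R}^I$-valued process
$$\mu_n:=\Big(\textstyle\sum_{j=1}^{\nu_1^n}(f(X_1^j)-\m{esp}f(X_1)),\ldots,\sum_{j=1}^{\nu_I^n}(f(X_I^j)-\m{esp}f(X_I))\Big)',$$
adapted to the filtration $\m{F}_n=\sigma(\m{ind}_{j\leq\nu_i^n}X_i^j,\,1\leq i\leq I,\,1\leq j)$, with the deterministic normalization $\gamma_n=n$. First I would check that $(\mu_n)$ is an $(\m{F}_n)$-martingale: the key point is that the rule for the $\nu_i^n$'s makes $\nu_i^n$, and in particular the chosen index $i_n$, $\m{F}_{n-1}$-measurable (it depends only on the allocations $(N_i^l)$, which are $\mathcal{G}_{k-1}$-measurable hence determined by past drawings, and on the $\nu_i^{n-1}$'s), so that conditionally on $\m{F}_{n-1}$ the increment $\mu_n-\mu_{n-1}$ equals $(f(X_{i_n}^{\nu_{i_n}^{n-1}+1})-\m{esp}f(X_{i_n}))$ in coordinate $i_n$ and $0$ elsewhere, which is centered given $\m{F}_{n-1}$ because the new drawing is independent of the past and distributed as $X_{i_n}$. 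Its conditional variance is $\sigma_{i_n}^2$ in the $(i_n,i_n)$ entry, so $\langle\mu\rangle_n=\mathrm{diag}\big((\sum_{m\le n}\m{ind}_{i_m=1}\sigma_1^2,\ldots)\big)=\mathrm{diag}\big((\nu_1^n\sigma_1^2,\ldots,\nu_I^n\sigma_I^2)\big)$.

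The Lindeberg condition (ii) is now easy and is exactly the payoff of introducing the microscopic scale: each increment $\mu_n-\mu_{n-1}$ has norm $|f(X_{i_n}^{\cdot})-\m{esp}f(X_{i_n})|$, so $\frac1n\sum_{m=1}^n\m{esp}[\|\mu_m-\mu_{m-1}\|^2\m{ind}_{\{\|\mu_m-\mu_{m-1}\|\ge\varepsilon\sqrt n\}}\mid\m{F}_{m-1}]$ is bounded, up to the factor $n/n$, by $\max_i\m{esp}[(f(X_i)-\m{esp}f(X_i))^2\m{ind}_{\{|f(X_i)-\m{esp}f(X_i)|\ge\varepsilon\sqrt n\}}]$, which tends to $0$ by dominated convergence since $\m{esp}(f^2(X))<\infty$. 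For the bracket condition (i) I must show $\nu_i^n/n\to q_i^*$ a.s.\ for each $i$; this is the real work. I would argue that the sharing rule is a deterministic ``largest-remainder / Bresenham''-type scheme which keeps $\nu_i^n$ close to its target along each step: within step $k$ it allocates the $N^k-N^{k-1}$ drawings so that $\nu_i^n-N_i^{k-1}$ stays within $O(1)$ of $r_i^k(n-N^{k-1})$ for $N^{k-1}<n\le N^k$ (the same kind of estimate as the systematic-sampling bound $m_i^k-1<\tilde m_i^k<m_i^k+1$), and at the step endpoints $\nu_i^{N^k}=N_i^k$. Combining this with $N_i^k/N^k\to q_i^*$ a.s.\ from \eqref{nikq} and writing, for $N^{k-1}<n\le N^k$, $\frac{\nu_i^n}{n}$ as a convex-combination-plus-$O(1/n)$ of $\frac{N_i^{k-1}}{N^{k-1}}$-type ratios and $r_i^k=\frac{N_i^k-N_i^{k-1}}{N^k-N^{k-1}}$, one gets $\nu_i^n/n\to q_i^*$; here the hypothesis $k/N^k\to0$ (hence $N^{k-1}/N^k\to1$, roughly) is what prevents the step lengths from mattering. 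Consequently $\frac1n\langle\mu\rangle_n\to\mathrm{diag}((q_1^*\sigma_1^2,\ldots,q_I^*\sigma_I^2))=:\Gamma$ a.s.

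Applying Theorem \ref{clt} then yields $\mu_n/\sqrt n\xrightarrow{\mathrm{law}}\m{nor}(0,\Gamma)$. Finally I write
$$\sqrt n\big(\tilde c^n-c\big)=\Big(p_1\tfrac{n}{\nu_1^n},\ldots,p_I\tfrac{n}{\nu_I^n}\Big)\cdot\frac{\mu_n}{\sqrt n},$$
and since the vector $(p_i n/\nu_i^n)_i\to(p_i/q_i^*)_i$ a.s.\ (using $q_i^*>0$, which follows from $p_i>0$ and $(H)$ — note $(H)$ guarantees $\sigma_*>0$; if $\sigma_i=0$ then $q_i^*=0$ and one handles that coordinate separately, as the corresponding entry of $\Gamma$ is $0$ and $p_i\sqrt n/\nu_i^n$ times that martingale coordinate vanishes), Slutsky's theorem gives that $\sqrt n(\tilde c^n-c)$ converges in law to a centered Gaussian with variance $\sum_i(p_i/q_i^*)^2 q_i^*\sigma_i^2=\sum_i p_i^2\sigma_i^2/q_i^*=(\sum_i p_i\sigma_i)^2=\sigma_*^2$, using the definition of $q_i^*$. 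The main obstacle is step (i): controlling $\nu_i^n/n$ uniformly inside the steps and passing from the step-endpoint convergence \eqref{nikq} to convergence along all $n$, which is precisely where $k/N^k\to0$ is used.
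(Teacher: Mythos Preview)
Your approach is essentially the same as the paper's: same martingale $\mu_n$, same filtration, same Lindeberg verification, same bracket computation, same Slutsky finish, and the same handling of the coordinates with $\sigma_i=0$.

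There is, however, a genuine slip in your treatment of the bracket condition. You claim that the convergence $\nu_i^n/n\to q_i^*$ relies on the hypothesis $k/N^k\to 0$ (and you add that this gives ``$N^{k-1}/N^k\to 1$, roughly''). Neither statement is correct. First, $k/N^k\to 0$ is \emph{not} among the assumptions of Proposition~\ref{prop-vit-conv} (and hence of Proposition~\ref{thm_main}); it is used only in Proposition~\ref{convnink} to establish~\eqref{nikq}. Second, $k/N^k\to 0$ does not imply $N^{k-1}/N^k\to 1$ (take $N^k=2^k$). The paper isolates the needed fact as Lemma~\ref{propnu} and proves it from~\eqref{nikq} alone: one introduces the piecewise function
\[
f(x)=\frac{x-N^{k-1}}{x}\,r_i^k+\frac{N^{k-1}}{x}\,\frac{N_i^{k-1}}{N^{k-1}},\qquad N^{k-1}<x\le N^k,
\]
observes that $f$ is monotone on each interval with endpoint values $N_i^{k-1}/N^{k-1}$ and $N_i^{k}/N^{k}$, both converging to $q_i^*$ by~\eqref{nikq}, so $f(n)\to q_i^*$; and then one checks the $O(1)$ bound you mention in the sharp form $-(I-1)<\nu_i^n-N_i^{k-1}-r_i^k(n-N^{k-1})<1$, which yields $|\nu_i^n/n-f(n)|<I/n$. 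No assumption on step lengths is needed. Once you replace your appeal to $k/N^k\to 0$ by this monotonicity argument, your proof matches the paper's.
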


\vspace{0.3cm}

In the proof of Proposition \ref{thm_main}, to verify the bracket condition of Theorem \ref{clt}, we will need the following result. 

\begin{lem}
 \label{propnu}
When \eqref{nikq} holds, then
$$\forall 1\leq i\leq I,\quad\frac{\nu_i^n}{n}\xrightarrow[n\to\infty]{}q_i^*\;\;\mathrm{a.s.}$$
\end{lem}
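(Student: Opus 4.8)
The goal is to transfer the convergence $N_i^k/N^k\to q_i^*$ from the macroscopic scale (indexed by $k$) to the microscopic scale (indexed by $n$). Fix $i$ and let $n$ be arbitrary, with $k=k(n)$ the unique index such that $N^{k-1}<n\leq N^k$. The key observation is that the Rule for the $\nu_i^n$'s is designed precisely so that, \emph{within} step $k$, the partial counts $\nu_i^n-N_i^{k-1}$ track the target proportions $r_i^k=(N_i^k-N_i^{k-1})/(N^k-N^{k-1})$ of the drawings added during that step. First I would establish the deterministic bound
$$\left|\frac{\nu_i^n-N_i^{k-1}}{n-N^{k-1}}-r_i^k\right|\le \frac{C}{n-N^{k-1}}$$
for some universal constant $C$ (indeed $C=1$ should work), valid for all $N^{k-1}<n\leq N^k$. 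This is the analogue, at the level of the $\nu$'s, of the property $m_i^k-1<\tilde m_i^k<m_i^k+1$ enjoyed by the systematic sampling procedure: the $\argmax$ rule in step 2 is a balancing/rounding scheme, and a short induction on $n$ shows that no partial proportion $(\nu_i^{n-1}-N_i^{k-1})/(n-N^{k-1})$ can stray from $r_i^k$ by more than one drawing's worth.

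Next I would rewrite $\nu_i^n/n$ as a convex-combination-like expression in terms of the macroscopic quantities. Writing $\nu_i^n = N_i^{k-1} + (\nu_i^n - N_i^{k-1})$ and $n = N^{k-1} + (n-N^{k-1})$, one has
$$\frac{\nu_i^n}{n}=\frac{N^{k-1}}{n}\cdot\frac{N_i^{k-1}}{N^{k-1}}+\frac{n-N^{k-1}}{n}\cdot\frac{\nu_i^n-N_i^{k-1}}{n-N^{k-1}}.$$
By \eqref{nikq} we have $N_i^{k-1}/N^{k-1}\to q_i^*$ a.s., and by the deterministic bound above the second inner factor differs from $r_i^k$ by at most $1/(n-N^{k-1})$. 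Moreover $r_i^k=(N_i^k-N_i^{k-1})/(N^k-N^{k-1})$ is itself a difference quotient of a sequence converging in proportion to $q_i^*$; using $N_i^k=q_i^*N^k+o(N^k)$ and $N_i^{k-1}=q_i^*N^{k-1}+o(N^{k-1})$ one gets $r_i^k=q_i^*+ (o(N^k)+o(N^{k-1}))/(N^k-N^{k-1})$. So everything comes down to controlling the error terms, i.e.\ showing that $o(N^k)/(N^k-N^{k-1})\to0$ and $1/(n-N^{k-1})\to0$ as $n\to\infty$ (equivalently $k\to\infty$).

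\textbf{Main obstacle.} The delicate point is exactly this last control: a priori $N^k-N^{k-1}$ could be small relative to $N^k$ (a short step following a long one), in which case $r_i^k$ need not be close to $q_i^*$ and the bound $1/(n-N^{k-1})$ for small $n-N^{k-1}$ is not yet small. The resolution is to avoid difference quotients altogether and argue directly with the numerator and denominator. For $N^{k-1}<n\leq N^k$ we have the sandwiching $N_i^{k-1}\le \nu_i^n\le N_i^k$ (each $\nu_i$ only increases during step $k$ and ends at $N_i^k$), hence
$$\frac{N_i^{k-1}}{N^k}\le\frac{\nu_i^n}{n}\le\frac{N_i^k}{N^{k-1}}.$$
If one can show $N^{k-1}/N^k\to1$ a.s.\ — which follows from the theorem's hypothesis $k/N^k\to0$ together with $N^k-N^{k-1}\ge I$, since then $N^k-N^{k-1}\le N^k-N^{k-2}\le\ldots$ forces, via a Cesàro-type argument, $N^k/N^{k-1}\to1$; more simply, if $N^{k-1}/N^k$ did not tend to $1$ along a subsequence then $N^k$ would grow at least geometrically, contradicting nothing by itself, so one must instead note that $k/N^k\to0$ is not actually needed here and the squeeze uses only \eqref{nikq} — then both outer bounds converge to $q_i^*$ and the squeeze theorem finishes the proof. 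Let me restate the clean version: since $N_i^{k-1}\le\nu_i^n\le N_i^k$ and $N^{k-1}<n\le N^k$, we get $\tfrac{N_i^{k-1}}{N^{k-1}}\cdot\tfrac{N^{k-1}}{N^k}\le\tfrac{\nu_i^n}{n}$ and $\tfrac{\nu_i^n}{n}\le\tfrac{N_i^k}{N^k}\cdot\tfrac{N^k}{N^{k-1}}$, so it suffices to prove $N^k/N^{k-1}\to1$ a.s.; this is where the hypothesis $k/N^k\to0$ enters, guaranteeing the steps are asymptotically negligible compared to the accumulated total, hence $N^k-N^{k-1}=o(N^k)$ is \emph{not} automatic — rather one argues that $\limsup N^k/N^{k-1}>1$ would be consistent, so the honest route is the first one via the balancing bound. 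I would therefore present the argument in the order: (1) prove the within-step balancing bound by induction on $n$; (2) combine it with \eqref{nikq} and the decomposition of $\nu_i^n/n$; (3) handle the step-size error terms using whichever of $k/N^k\to0$ or a direct $N^{k-1}/N^k\to1$ argument is cleanest, and conclude by the squeeze theorem. The main work is step (1), the combinatorial balancing estimate; steps (2)--(3) are routine once it is in hand.
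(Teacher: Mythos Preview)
Your step (1), the within-step balancing bound, is exactly right and matches the paper's inequality
\[
r_i^k-\frac{I-1}{n-N^{k-1}}<\frac{\nu_i^n-N_i^{k-1}}{n-N^{k-1}}<r_i^k+\frac{1}{n-N^{k-1}}.
\]
Your convex-combination decomposition of $\nu_i^n/n$ is also the same as the paper's. The gap is in your step (3), and you have in fact put your finger on it yourself without resolving it.

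First, $k/N^k\to 0$ is \emph{not} a hypothesis of this lemma: the lemma assumes only \eqref{nikq}. Second, even if you allowed yourself $k/N^k\to 0$, it does \emph{not} imply $N^k/N^{k-1}\to 1$; the sequence $N^k=2^k$ satisfies $k/N^k\to 0$ while $N^k/N^{k-1}\equiv 2$. So neither variant of your step (3) goes through, and your direct attempt to show $r_i^k\to q_i^*$ via $r_i^k=q_i^*+o(N^k)/(N^k-N^{k-1})$ fails for the same reason: the denominator $N^k-N^{k-1}$ can stay of order $N^k$ or be tiny, and you have no control over it.

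The paper sidesteps this completely by a monotone-interpolation trick. Define, for $x\in(N^{k-1},N^k]$,
\[
f(x)=\frac{x-N^{k-1}}{x}\,r_i^k+\frac{N^{k-1}}{x}\cdot\frac{N_i^{k-1}}{N^{k-1}}.
\]
This is exactly your decomposition with $\nu_i^n-N_i^{k-1}$ replaced by its target $r_i^k(n-N^{k-1})$. One checks $f(N^{k-1})=N_i^{k-1}/N^{k-1}$, $f(N^k)=N_i^k/N^k$, and $f'(x)$ has constant sign on each interval, so $f$ is monotone there and hence $f(n)$ is trapped between $N_i^{k-1}/N^{k-1}$ and $N_i^k/N^k$. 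Both endpoints converge to $q_i^*$ by \eqref{nikq}, so $f(n)\to q_i^*$ with no assumption on the step sizes. Finally your balancing bound gives
\[
\frac{\nu_i^n}{n}-f(n)=\frac{n-N^{k-1}}{n}\Bigl(\frac{\nu_i^n-N_i^{k-1}}{n-N^{k-1}}-r_i^k\Bigr)\in\Bigl(-\frac{I-1}{n},\frac{1}{n}\Bigr),
\]
and the proof is complete. The point is that one never needs $r_i^k$ to converge: the weighted combination $f(n)$ converges because its values are squeezed between two consecutive terms of the convergent sequence $(N_i^k/N^k)_k$.
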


\begin{proof}
Let be $1\leq i\leq I$. During the sequel, for $x\in\m{R}_+^*$ or $n\in\m{N}^*$, the integer~$k$ is implicitely such that  $N^{k-1}<x,n\leq N^k$.

We notice that for any $n\in\m{N}^*$
$$\frac{\nu_i^n}{n}=\frac{n-N^{k-1}}{n}.\frac{\nu_i^n-N_i^{k-1}}{n-N^{k-1}}
+\frac{N^{k-1}}{n}.\frac{N_i^{k-1}}{N^{k-1}},
$$
and define for $x\in\m{R}_+^*$,
$$f(x):=\frac{x-N^{k-1}}{x}.\frac{N_i^k-N_i^{k-1}}{N^k-N^{k-1}}
+\frac{N^{k-1}}{x}.\frac{N_i^{k-1}}{N^{k-1}}.
$$

We will see that,
as $n$ tends to infinity, $f(n)$ tends to $q_i^*$ and  $f(n)-\frac{\nu_i^n}{n}$ tends to zero.

Computing the derivative of $f$ on any interval $(N^{k-1},N^k]$ we find that this function is monotonic on it. Besides $f(N^{k-1})=\frac{N_i^{k-1}}{N^{k-1}}$ and $f(N^k)=\frac{N_i^k}{N^k}$.
So if $\frac{N_i^k}{N^k}$ tends to $q_i^*$ as $k$ tends to infinity, we can conclude that 
\begin{equation}
 \label{ftq}
f(n)\xrightarrow[n\to\infty]{}q_i^*.
\end{equation}

As $r_i^k=\frac{N_i^k-N_i^{k-1}}{N^k-N^{k-1}}$ we now write
$$\dfrac{\nu_i^n}{n}-f(n)=\dfrac{n-N^{k-1}}{n}
\Big(\frac{\nu_i^n-N_i^{k-1}}{n-N^{k-1}}-r_i^k\Big).$$
We conclude the proof by checking that 
\begin{equation}
 \label{micro}
r_i^k-\dfrac{I-1}{n-N^{k-1}}<\dfrac{\nu_i^n-N_i^{k-1}}{n-N^{k-1}}<r_i^k+\dfrac{1}{n-N^{k-1}}.
\end{equation}
Indeed, this inequality implies
$$-\dfrac{I-1}{n}<\dfrac{\nu_i^n}{n}-f(n)<\dfrac{1}{n},$$
which combined with \eqref{ftq} gives the desired conclusion.
We first show
\begin{equation}
\label{maj}
\dfrac{\nu_i^n-N_i^{k-1}}{n-N^{k-1}}<r_i^k+\dfrac{1}{n-N^{k-1}}.
\end{equation}
We distinguish two cases.
Either 
 $\nu_i^{n'}=N_i^{k-1}$ for all $N^{k-1}<n'\leq n$, that is to say no
drawing at all is made in stratum $i$ between $N^{k-1}$ and $n$, then
\eqref{maj} is trivially verified.

Either some drawing is made between $N^{k-1}$ and $n$. Let us denote by $n'$
the index of the last one, i.e. we have
$\nu_i^n=\nu_i^{n'}=\nu_i^{n'-1}+1$.
As a drawing is made at $n'$ we have
$\frac{\nu_i^{n'-1}-N_i^{k-1}}{n'-N^{k-1}}<r_i^k$.

We thus have,
$$\dfrac{\nu_i^{n'-1}-N_i^{k-1}}{n-N^{k-1}}\leq
\dfrac{\nu_i^{n'-1}-N_i^{k-1}}{n'-N^{k-1}}<r_i^k$$
and
$$\dfrac{\nu_i^{n}-N_i^{k-1}}{n-N^{k-1}}=\dfrac{\nu_i^{n'-1}+1-N_i^{k-1}}{n-N^{k-1}},$$
and thus we have again \eqref{maj}.

Using now the fact that
$1=\sum_{i=1}^Ir_i^k=\sum_{i=1}^I\frac{\nu_i^n-N_i^{k-1}}{n-N^{k-1}}$ we
get
$$\dfrac{\nu_i^n-N_i^{k-1}}{n-N^{k-1}}=r_i^k+\sum_{i\neq
  j}\Big(r_j^k-\dfrac{\nu_j^n-N_i^{k-1}}{n-N^{k-1}}\Big)$$

Using this and \eqref{maj} we get \eqref{micro}.
\end{proof}

\begin{proof}[Proof of Proposition \ref{thm_main}.] For $n\geq N^1$,
  $\nu_i^n\geq 1$ for all $1\leq i\leq I$ and we can write
\begin{equation}
\label{slu1} 
\sqrt{n}\big(\tilde{c}^n-c\big)=
\left(\begin{array}{c}
          p_1\frac{n}{\nu_1^n}\\
\vdots\\
p_I\frac{n}{\nu_I^n}\\
         \end{array}\right).\frac{\displaystyle 1}{\displaystyle\sqrt{n}}\mu_n,
\end{equation}
with
$$\mu_n=\left(\begin{array}{c}
          \sum_{j=1}^{\nu_1^n}(f(X_1^j)-\m{esp}f(X_1))\\
\vdots\\
\sum_{j=1}^{\nu_I^n}(f(X_I^j)-\m{esp}f(X_I))\\
         \end{array}\right).
$$
Note that if $\sigma_i=0$ for a stratum $i$, then $q_i^*=0$ and by Lemma
\ref{propnu},
$\frac{n}{\nu_i^n}\xrightarrow[n\to\infty]{\mathrm{a.s.}}+\infty$ which
may cause some trouble in the convergence analysis. In compensation, $\sigma_i=0$ means 
that $f(X_i)-\m{esp}f(X_i)=0$ a.s. Thus the component $\mu_n^i$ of $\mu_n$ 
makes no contribution in $\tilde{c}^n-c$. So we might rewrite \eqref{slu1} 
with $\mu_n$ a vector of size less than $I$, whose components correspond only to indexes $i$ with $\sigma_i> 0$. For the seek of simplicity we keep the size $I$ and consider that $\sigma_i> 0$ for all $1\leq i\leq I$.  

If we define $\m{F}_n:=\sigma(\m{ind}_{j\leq \nu_i^n}X_i^j,\,1\leq i\leq I,\,1\leq j)$,
then
 $(\mu_n)_{n\geq 0}$ is obviously a $(\m{F}_n)$-martingale. Indeed, for $n\in\m{N}^*$ let  $k\in\m{N}^*$ such that $N^{k-1}<n\leq N^k$. For 
$1\leq i\leq I$ the variables $N_i^{k-1}$ and $N_i^k$  are respectively $\m{F}_{N^{k-2}}$ and $\m{F}_{N^{k-1}}$-measurable (Step $k>1$ in the Algorithm). As for each $1\leq i\leq I$ the quantity $\nu_i^n$ depends on the $N_i^{k-1}$'s and the
 $N_i^k$'s, it is $\m{F}_{N^{k-1}}$-measurable. Thus $\mu_n$ is $\m{F}_n$-measurable and
easy computations show that $\m{esp}[\mu_{n+1}|\m{F}_n]=\mu_n$.

We wish to use Theorem \ref{clt} with $\gamma_n=n$. 
We will denote by $\mathrm{diag}(a_i)$ the $I\times I$ matrix having null coefficients except the $i$-th diagonal term with value~$a_i$.

We first verify the Lindeberg condition. We have, using the sequence $(i_n)$ defined in the rule for the $\nu_i^n$'s,
$$\begin{array}{ll}
& \frac{\displaystyle 1}{\displaystyle n}\sum_{l=1}^n
\m{esp}\big[||\mu_l-\mu_{l-1}||^2
\m{ind}_{\{||\mu_l-\mu_{l-1}||>\varepsilon\sqrt{n}\}}|\m{F}_{l-1} \big]\\
\\
=&\frac{\displaystyle 1}{\displaystyle n}\sum_{l=1}^n
\m{esp}\big[|f(X_{i_l}^{\nu_{i_l}^l})-\m{esp}f(X_{i_l})|^2
\m{ind}_{\{|f(X_{i_l}^{\nu_{i_l}^l})-\m{esp}f(X_{i_l})|>\varepsilon\sqrt{n}\}}|\m{F}_{l-1} \big]\\
\\
\leq&\frac{\displaystyle 1}{\displaystyle n}\sum_{l=1}^n
\sup_{1\leq i\leq I}\m{esp}\big[|f(X_i)-\m{esp}f(X_{i})|^2\m{ind}_{\{|f(X_i)-\m{esp}f(X_{i})|>\varepsilon\sqrt{n}\}}\big]\\
\\
=&\sup_{1\leq i\leq I}\m{esp}\big[|f(X_i)-\m{esp}f(X_{i})|^2\m{ind}_{\{|f(X_i)-\m{esp}f(X_{i})|>\varepsilon\sqrt{n}\}}\big].\\
\end{array}
$$

As
$$\sup_{1\leq i\leq I}\m{esp}\big[|f(X_i)-\m{esp}f(X_{i})|^2\m{ind}_{\{|f(X_i)-\m{esp}f(X_{i})|>\varepsilon\sqrt{n}\}}\big]
\xrightarrow[n\to\infty]{}0,$$
the Lindeberg condition is proven.

We now turn to the bracket condition. We have,
$$\begin{array}{lll}
 \langle\mu\rangle_n&=&\sum_{k=1}^n
\m{esp}\big[(\mu_k-\mu_{k-1})(\mu_k-\mu_{k-1})'|\m{F}_{k-1}\big]\\
\\
&=&\sum_{k=1}^n\mathrm{diag}\Big(\m{esp}
\big[\,\big|f(X_{i_k}^{\nu_{i_k}^k})-\m{esp}f(X_{i_k})\big|^2\,\big]\Big)\\
\\
&=&\sum_{k=1}^n\mathrm{diag}\Big(\sigma_{i_k}^2\Big).\\
\end{array}
 $$

Thus,  we have
$$\frac{\langle\mu\rangle_n}{n}=\mathrm{diag}
\big(\,(\frac{\nu_1^n}{n}\sigma_1^2,\ldots,\frac{\nu_I^n}{n}\sigma_I^2)\,\big)
\xrightarrow[n\to\infty]{}\mathrm{diag}
\big(\,(q_1^*\sigma_1^2,\ldots,q_I^*\sigma_I^2)\,\big)\;\;\mathrm{a.s.},$$
where we have used Lemma \ref{propnu}.

Theorem \ref{clt} implies that
\begin{equation}
\label{eqTCL}
 \frac{\mu_n}{\sqrt{n}}\xrightarrow[n\to\infty]{\mathrm{in law}}
\m{nor}\Big(0,\mathrm{diag}
\big(\,(q_1^*\sigma_1^2,\ldots,q_I^*\sigma_I^2)\,\big)\Big).
\end{equation}

Using again Lemma \ref{propnu} we have
\begin{equation}
 \label{sluvec}
(p_1\frac{n}{\nu_1^n},\ldots,p_I\frac{n}{\nu_I^n})
\xrightarrow[n\to\infty]{}(\frac{p_1}{q_1^*},\ldots,\frac{p_I}{q_I^*})\;\;\mathrm{a.s.}
\end{equation}

 Using finally Slutsky's theorem, \eqref{slu1}, \eqref{eqTCL} and \eqref{sluvec}, we get,
$$ \sqrt{n}\big(\tilde{c}^n-c\big)
\xrightarrow[n\to\infty]{\mathrm{in law}}
\m{nor}\big(0,\sigma_*^2).$$
\end{proof}

\subsection{Proof of Proposition  \ref{convnink}}
Thanks to $(H)$ and Proposition \ref{conv} there exists $K\in\m{N}$ s.t. for all
$k\geq K$ we have $\sum_{i=1}^Ip_i\widehat{\sigma}_i^k>0$.
The proportions
$(\rho_i^k=\frac{p_i\widehat{\sigma}_i^k}{\sum_{j=1}^Ip_j\widehat{\sigma}_j^k})_i$
are well defined for all $k\geq K$ and play an important role in both
allocation rules {\it a)} and {\it b)}. Proposition \ref{conv} implies convergence of
$\rho_i^k$ as $k\rightarrow+\infty$.
\begin{lem}
\label{lem-rhob}
Under the assumptions of Theorem \ref{thm-vit-conv}, $$\forall  1\leq i\leq I,\quad \rho_i^k\xrightarrow[k\to\infty]{}q_i^*\;\;\mathrm{a.s.}$$
\end{lem}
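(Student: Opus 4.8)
The plan is to obtain this as an immediate corollary of Proposition \ref{conv}. First I would observe that the hypotheses of Theorem \ref{thm-vit-conv} include $\m{esp}(f^2(X))<+\infty$, so the second assertion of Proposition \ref{conv} applies: there is an event $\Omega_0$ with $\m{P}(\Omega_0)=1$ on which $\widehat{\sigma}_i^k\to\sigma_i$ for every $1\leq i\leq I$ and $\sum_{j=1}^I p_j\widehat{\sigma}_j^k\to\sigma_*$. Next I would invoke assumption $(H)$, which forces $\sigma_*=\sum_{j=1}^I p_j\sigma_j>0$; hence on $\Omega_0$ the denominator $\sum_{j=1}^I p_j\widehat{\sigma}_j^k$ converges to the strictly positive number $\sigma_*$ and is in particular bounded away from $0$ for $k$ large enough (this is the integer $K$ already exhibited just before the statement), so that $\rho_i^k$ is well defined for all such $k$.

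Finally, on $\Omega_0$ I would pass to the limit in the quotient $\rho_i^k=\dfrac{p_i\widehat{\sigma}_i^k}{\sum_{j=1}^Ip_j\widehat{\sigma}_j^k}$ by the algebra of limits: the numerator tends to $p_i\sigma_i$ and the denominator to $\sigma_*\neq 0$, whence $\rho_i^k\to p_i\sigma_i/\sigma_*=q_i^*$. Since $\m{P}(\Omega_0)=1$, this is exactly the asserted almost sure convergence. There is no real obstacle in this lemma; the only point that needs attention is that the quotient rule for limits requires the limiting denominator to be non-zero, which is precisely the role played by hypothesis $(H)$ — without it one could have $\sigma_*=0$ and the proportions $\rho_i^k$ might fail to converge, or even fail to be defined for infinitely many $k$.
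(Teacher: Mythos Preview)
Your proposal is correct and matches the paper's approach: the paper simply states, immediately before the lemma, that ``Proposition \ref{conv} implies convergence of $\rho_i^k$ as $k\rightarrow+\infty$'' and offers no further proof. Your argument just spells out this one-line remark, invoking $(H)$ to ensure the limiting denominator $\sigma_*$ is nonzero so that the quotient rule applies.
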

\begin{proof}[Proof of Proposition \ref{convnink} for allocation rule {\it a)}]
 Let be $1\leq i\leq I$. We have $\frac{N_i^k}{N^k}=\frac{k+\sum_{l=1}^k\tilde{m}_i^l}{N^k}$. Using the fact
that $m_i^l-1<\tilde{m}_i^l<m_i^l+1$ we can write
$$\dfrac{\sum_{l=1}^km_i^l}{N^k}\leq \dfrac{N_i^k}{N^k}\leq
\dfrac{2k}{N^k}+\dfrac{\sum_{l=1}^km_i^l}{N^k}.$$

We will show that $\frac{\sum_{l=1}^km_i^l}{N^k}\to q_i^*$, and, as $\frac{k}{N^k}\to 0$, will get the desired result.

For $k\geq K+1$, we have
\begin{align*}
   &\dfrac{\sum_{l=1}^km_i^l}{N^k}=\dfrac{\sum_{l=1}^{K} m_i^l}{N^k}+\dfrac{\sum_{l=K+1}^k\rho_i^l(N^l-N^{l-1}-I)}{N^k}\\&=\dfrac{\sum_{l=1}^{K} m_i^l}{N^k}+\frac{N^k-N^{K}}{N^k}\times\frac{1}{N^k-N^{K}}\sum_{n=N^{K}+1}^{N^k}\tilde{\rho}_i^n-\frac{I(k-K)}{N^k}\times\frac{1}{k-K}\sum_{l=K}^k\rho_i^l
\end{align*}
where the sequence $(\tilde{\rho}_i^n)$ defined by
$\tilde{\rho}_i^n=\rho_i^l$ for $N^{l-1}<n\leq N^l$ converges to $q_i^*$
as $n$ tends to infinity. The Cesaro
means which appear as factors in the second and third terms of the
r.h.s. both converge a.s. to $q_i^*$. One easily deduce that the first,
second and third terms respectively converge to $0$, $q_i^*$ and $0$.
\end{proof}
\begin{proof}[Proof of Proposition \ref{convnink} for allocation rule {\it b)}]

\vspace{0.3cm}

There may be some strata of zero variance. We denote by $I'$
 ($I'\leq I$) the number of strata of non zero variance.

For a stratum $i$ of zero variance the only drawing made at each step will be the one forced by \eqref{p1}. Indeed $\widehat{\sigma}_i^k=0$ for all $k$ in this case. 
Thus $N_i^k=k$ for all the strata of zero variance and since
$\frac{k}{N^k}\rightarrow 0$, we get the desired result  for them (note that of course
$q_i^*=0$ in this case).

\vspace{0.4cm}

We now work on the $I'$ strata such that $\sigma_i>0$. We renumber these strata
from $1$ to $I'$. Let now $K'$ be such that $\widehat{\sigma}_i^{k}>0$
for all $k\geq K'$, and all $1\leq i\leq I'$. For $k\geq K'$, the integer $I^{k+1}$ at step $k+1$ in procedure
{\it b)} is equal to $I'$. 

\vspace{0.3cm}

{\it Step 1.} We will firstly show that
\begin{equation}
\label{leqmax}
\forall k\geq K',\,\forall 1\leq i\leq I'\qquad\frac{N_i^{k+1}}{N^{k+1}}\leq
\frac{N_i^k+1}{N^{k+1}}\vee\big(\rho_i^k+\frac{1}{N^{k+1}}\big).
\end{equation}
Let $k\geq K'$. At step $k+1$ we denote by $(.)_k$ the ordered index in
Point i) of procedure~{\it b)} and by $i_k^*$ the index $i^*$ in Point
ii). We also set $n_i^{k+1}=N_i^k+1+m_i^{k+1}$. By Point
iii), for $i>i^*_k$, 
\begin{equation}
 \label{ts?aux}
\begin{array}{lll}
\dfrac{n^{k+1}_{(i)_k}}{p_{(i)_k}\widehat{\sigma}^{k}_{(i)_k}}
=\dfrac{m^{k+1}_{(i)_k}+N^{k}_{(i)_k}+1}{p_{(i)_k}\widehat{\sigma}^{k}_{(i)_k}}&=&
\dfrac{N^{k+1}-N^k-I+\sum_{j=i_k^*+1}^{I'}(N^{k}_{(j)_k}+1)}
{\sum_{j=i_k^*+1}^{I'}p_{(j)_k}\widehat{\sigma}^{k}_{(j)_k}}\\
\end{array}
\end{equation}

{\it Case 1:} $i_k^*=0$. Then, in addition to the drawing forced by \eqref{p1}, there are some drawings at step $k+1$ in stratum $(1)_k$, and consequently in all the strata. 
 Thus \eqref{ts?aux} leads to
$$n_{i}^{k+1}=\rho_i^k
\left(N^{k+1}-N^k-I+I'+\sum_{j=1}^{I'}N_{j}^k\right),\;\;\forall 1\leq i\leq I'.$$
But $N^k=\sum_{j=1}^{I'}N_{j}^k+k(I-I')$ and,
following the systematic sampling procedure, we have
\begin{equation}
 \label{niNi1} 
N_i^{k+1}< n_i^{k+1}+1,\quad \forall 1\leq i\leq I'.
\end{equation}

Thus, in this case,
$$\dfrac{N_{i}^{k+1}}{N^{k+1}}\leq \rho_i^k+\dfrac{1}{N^{k+1}},\quad\forall 1\leq i\leq I'.$$

\vspace{0.2cm}

{\it Case 2:} $i_k^*>0$. 
If $i\leq i_k^*$,
$N_{(i)_k}^{k+1}=N_{(i)_k}^{k}+1$
and \eqref{leqmax} holds.

If $i>i_k^*$, then \eqref{ts?aux} leads to
$$\frac{n_{(i)_k}^{k+1}}{N^{k+1}}
=\rho_{(i)_k}^k
\dfrac{N^{k+1}-N^k-I+\sum_{j=i_k^*+1}^{I'}(N^{k}_{(j)_k}+1)}
{N^{k+1}\sum_{j=i_k^*+1}^{I'}\rho_{(j)_k}^k}.$$
Using \eqref{niNi1}, it is enough
to check that
\begin{equation}
\label{prodcomp}
 \dfrac{N^{k+1}-N^k-I+\sum_{j=i_k^*+1}^{I'}(N^{k}_{(j)_k}+1)}
{N^{k+1}\sum_{j=i_k^*+1}^{I'}\rho_{(j)_k}^k}\leq 1
\end{equation}
in order to deduce that \eqref{leqmax} also holds for $i>i_k^*$.
 
If $\frac{N_{(i_k^*)_k}^k+1}{N^{k+1}\rho_{(i_k^*)_k}^k}\leq 1$, then inequality \eqref{prodcomp} holds by the definition of $i_k^*$.

If $\frac{N_{(i_k^*)_k}^k+1}{N^{k+1}\rho_{(i_k^*)_k}^k}>1$ we have
$\frac{N_{(i)_k}^k+1}{N^{k+1}\rho_{(i)_k}^k}>1$, $\forall i\leq i_k^*$
and thus
$$\sum_{j=1}^{i_k^*}(N_{(j)_k}^k+1)>N^{k+1}\sum_{j=1}^{i_k^*}\rho_{(j)_k}^k.$$
This inequality also writes
$$N^k-k(I-I')+I'-\sum_{j=i_k^*+1}^{I'}(N_{(j)_k}^k+1)
>N^{k+1}\big(1-\sum_{j=i_k^*+1}^{I'}\rho_{(j)_k}^k\big),$$
and \eqref{prodcomp} follows.

\vspace{0.4cm}

{\it Step 2.} Let $1\leq i\leq I'$. We set $\bar{n}_i^k:=N_i^k-k$ (this the number of drawings in stratum $i$ that 
have not been forced by \eqref{p1}).

Using \eqref{leqmax} we have 
$$\forall k\geq K',\quad\frac{N_i^{k+1}-(k+1)}{N^{k+1}}\leq \frac{N_i^k+1-(k+1)}{N^{k+1}}\vee
\big(\rho_i^k-\frac{k}{N^{k+1}}\big),$$
and thus
$$\forall k\geq K',\quad\frac{\bar{n}_i^{k+1}}{N^{k+1}}\leq \frac{\bar{n}_i^k}{N^{k+1}}\vee
\big(\rho_i^k-\frac{k}{N^{k+1}}\big).$$

Let $\varepsilon>0$. Thanks to Lemma \ref{lem-rhob}, there exists $k_0\geq K'$ s.t. for all $k\geq k_0$,  $\rho_i^k-\frac{k}{N^{k+1}}\leq q_i^*+\varepsilon$.
Thus 
\begin{equation}
 \label{aveceps}
\forall k\geq k_0,\quad\frac{\bar{n}_i^{k+1}}{N^{k+1}}\leq \frac{\bar{n}_i^k}{N^{k+1}}\vee
\big(q_i^*+\varepsilon).
\end{equation}

By induction
$$\forall k\geq k_0, \frac{\bar{n}_i^k}{N^k}\leq\frac{\bar{n}_i^{k_0}}{N^k}\vee(q_i^*+\varepsilon).
$$
Indeed suppose $\frac{\bar{n}_i^k}{N^k}\leq\frac{\bar{n}_i^{k_0}}{N^k}\vee(q_i^*+\varepsilon)$. If $\frac{\bar{n}_i^k}{N^k}\leq q_i^*+\varepsilon$ then $\frac{\bar{n}_i^k}{N^{k+1}}\leq q_i^*+\varepsilon$ and using \eqref{aveceps} we get
$\frac{\bar{n}_i^{k+1}}{N^{k+1}}\leq q_i^*+\varepsilon$.  Otherwise $\bar{n}_i^k=\bar{n}_i^{k_0}$ and using \eqref{aveceps} we are done.

\vspace{0.1cm}
But as $\frac{\bar{n}_i^{k_0}}{N^k}\to 0$ as $k\to\infty$ we deduce that $\limsup_k\frac{\bar{n}_i^k}{N^k}\leq
q_i^*+\varepsilon$. Since this is true for any $\varepsilon$, and
$\frac{k}{N^k}\rightarrow 0$, we can conclude that $\limsup_k\frac{N_i^k}{N^k}\leq q_i^*$.
Now using the indexation on all the strata and the result for the strata
with variance zero, we deduce that for $1\leq
i\leq I$,
$$
\begin{array}{lll}
\liminf_k\dfrac{N_i^k}{N^k}
=\liminf_k\Big(1-\sum_{\stackrel{j=1}{j\neq i}}^I\dfrac{N_j^k}{N^k}\Big)&\geq& 1-\sum_{\stackrel{j=1}{j\neq i}}^I\limsup_k\dfrac{N_j^k}{N^k}\\&=&1-\sum_{\stackrel{j=1}{j\neq i}}^Iq_j^*=q_i^*.
\end{array}
$$
This concludes the proof.\end{proof}

\vspace{0.3cm}

\vspace{0.5cm}

\section{Numerical examples and applications to option pricing}
\label{snum}

\subsection{A first simple example}
\label{ssfirstex}

We compute $c=~\m{esp}X$ where $X\sim\m{nor}(0,1)$.

\vspace{0.2cm}

Let $I=10$. We choose the strata to be given by the $\alpha$-quantiles $y_\alpha$ of the normal law
for $\alpha=i/I$ for $1\leq i\leq I$. That is to say  $A_i=(y_{\frac{i-1}{I}},y_{\frac{i}{I}}]$ for all
$1\leq i\leq I$, with the convention that $y_0=-\infty$ and $y_1=+\infty$.

\vspace{0.2cm}

In this setting we have $p_i=1/10$ for all $1\leq i\leq I$.

\vspace{0.3cm}

Let us denote by $d(x)$ the density of the law $\m{nor}(0,1)$. Thanks to the relation
$d'(x)=-xd(x)$ and using integration by parts, we can establish that, for all $1\leq i\leq I$,
$$\m{esp}\Big(X\m{ind}_{y_{\frac{i-1}{I}}<X\leq{y_{\frac{i}{I}}}}\Big)=d(y_{\frac{i-1}{I}})-d(y_{\frac{i}{I}}),$$
and
$$\m{esp}\Big(X^2\m{ind}_{y_{\frac{i-1}{I}}<X\leq{y_{\frac{i}{I}}}}\Big)
=y_{\frac{i-1}{I}}d(y_{\frac{i-1}{I}})-y_{\frac{i}{I}}d(y_{\frac{i}{I}})+p_i,$$
with the convention that $y_0d(y_0)=y_1d(y_1)=0$.

We can then compute the exact $\sigma_i^2=\m{var}(X|X\in A_i)$'s and the optimal standard deviation of the non-adaptive stratified estimator,
$$\sigma_*=\sum_{i=1}^Ip_i\sigma_i\simeq 0.1559335$$

We can also for example compute
$$q_5^*=0.04685$$

This will give us benchmarks for our numerical tests.

\vspace{0.4cm}

We will compute $\hat{c}^k$ for $k=1,\ldots,4$. We choose $N^1=300$, $N^2=1300$,
$N^3=11300$ and $N^4=31300$.

First for one realization of the sequence $(\hat{c}^k)_{k=1}^4$ we plot the evolution of $\frac{N_5^k}{N^k}$, when we use procedure {\it a)} or {\it b)} for the computation of allocations.  This is done on Figure \ref{fignink}.

 \begin{figure}
  \begin{center}
 \includegraphics[width=9cm]{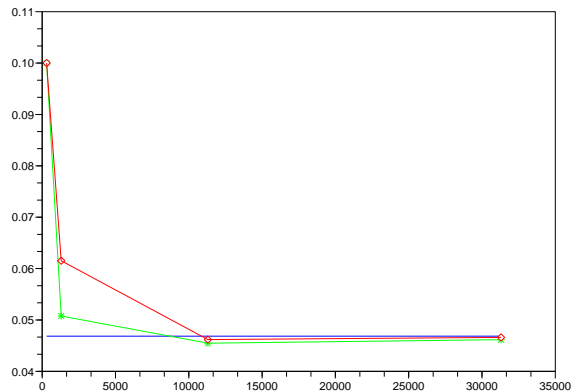}
 \caption{Successive values of $\frac{N_5^k}{N^k}$ for $1\leq k\leq 4$, for procedure {\it a)}
  (the $\diamond$-line) and procedure {\it b)} (the $*$-line), in function of $N^k$. The horizontal line is at level
 $q_5^*$.}
 \label{fignink}
 \end{center}
 \end{figure}

We observe that the convergence of $\frac{N_5^k}{N^k}$ to $q_5^*$ is faster
with procedure {\it b)}.

\vspace{0.4cm}

Second, to estimate the variance of our adaptive stratified estimator, we do $L=10000$ runs of all the procedure leading to the sequence $(\hat{c}^k)_{k=1}^4$. For $1\leq k\leq 4$ we compute,
$$\hat{v}^k=\frac{1}{L}\sum_{l=1}^L([\hat{c}^k]^l)^2-\Big(\frac{1}{L}\sum_{l=1}^L[\hat{c}^k]^l\Big)^2,$$
with the $\big([\hat{c}^k]^l\big)_{1\leq l\leq L}$ independent runs of the algorithm till step $k$.
This estimates the variance of the stratified estimator at step $k$ ($N^k$ total drawings have been used). To compare with $\sigma_*$ we compute the quantities
$$\hat{s}_k=\sqrt{N^k\hat{v}^k}$$ (in other words we compare the standard deviation of our adaptive stratified estimator with $N^k$ total drawings with the one of the non-adaptive stratified estimator with optimal allocation, for the same number of total drawings).

 \begin{figure}
 \begin{center}
  \includegraphics[width=9cm]{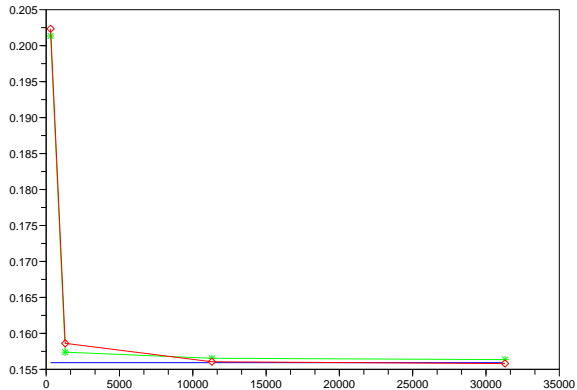}
 \caption{Successive values of $\hat{s}^k$ for $1\leq k\leq p$, for procedure {\it a)}
  (the $\diamond$-line) and procedure {\it b)} (the $*$-line), in function of $N^k$ (the abscissas axe). The horizontal line is at level
 $\sigma_*$.}
 \label{figsk} 
 \end{center}
 \end{figure}

The values are ploted on Figure \ref{figsk}.
We observe that the convergence to $\sigma_*$ is slightly faster with procedure {\it b)}. This corresponds to the fact that the convergence of the $\frac{N_i^k}{N^k}$'s is faster with this later procedure (see Proposition \ref{prop-vit-conv}).

\vspace{0.5cm}
We wish to compare the efficiency of our algorithm with the one of the non-adaptive stratified estimator with proportional allocation. Indeed this is the one we would use if we did not know the $\sigma_i$'s.

With the same strata as in the previous setting the stratified estimator with proportional allocation of $c$ for a total number of drawings $N^4=31300$ is
$$\bar{c}=\frac{1}{N^4}\sum_{i=1}^{10}\sum_{j=1}^{3130}X_i^j.$$
We will compare it to $\hat{c}^4$ that was computed in the example above. As we have seen in the Introduction, the variance of $\bar{c}$ is
$$\frac{1}{N^4}\sum_{i=1}^{10}p_i\sigma_i^2.$$

We do $L=10000$ runs of $\hat{c}^4$ and $\bar{c}$. We get an estimation $\hat{v}^4$ of the variance
of $\hat{c}^4$ as previously. In a similar manner we get an approximation 
$\bar{v}=\frac{1}{L}\sum_{l=1}^L([\bar{c}]^l)^2-\Big(\frac{1}{L}\sum_{l=1}^L[\bar{c}]^l\Big)^2$ of the variance of $\bar{c}$. 

As $\sum_{i=1}^{10}p_i\sigma_i^2\geq \big(\sum_{i=1}^{10}p_i\sigma_i \big)^2$ we know that we will have $\bar{v}\geq\hat{v}^4$. But to compute $\hat{c}^4$ we do some additional computations compared to a non adaptive stratified estimator. This has a numerical cost.
We thus use the $L$ runs to compute the average computation times $\hat{t}^4$ and $\bar{t}$, respectively of $\hat{c}^4$ and $\bar{c}$.

We have $\hat{t}^4\hat{v}^4=6.29*10^{-8}$ and $\bar{t}\bar{v}=7.57*10^{-8}$. This means that in this toy example
the numerical cost of our algorithm is not that much balanced by the achieved variance reduction.

\subsection{Applications to option pricing}\label{ssopt}

\subsubsection{The setting}

We wish to compare our results with the ones of \cite{glass}.

We will work on the example of the arithmetic Asian option in the Black-Scholes model  presented in this paper. We shortly present the setting. We have a single underlying asset, with price at time $t$ denoted by $S_t$. Under the risk neutral measure $\m{P}$, the price $(S_t)_t$ follows the stochastic differential equation,
$$dS_t=V\,S_tdW_t+rS_tdt,$$
with $r$ the constant interest rate, $V$ the constant asset's volatility, $W_t$ a standard Wiener
process, and $S_0$ fixed. 

Let $T>0$ be the option's maturity and  $\big(t_m=\frac{mT}{d}\big)_{1\leq m\leq d}$  the sequence of times when the value of the underlying asset is monitored to compute the average.
The discounted payoff of the arithmetic Asian option with strike $K$ 
is given by
$$e^{-rT}\Big(\frac{1}{d}\sum_{m=1}^dS_{t_m}-K \Big)^+.$$
Thus the price of the option is given by
$$c=\m{esp}\Big[ \,e^{-rT}\Big(\frac{1}{d}\sum_{m=1}^dS_{t_m}-K \Big)^+\,\Big].$$

But in this Black-Scholes setting we can exactly simulate the $S_{t_m}$'s using the fact that
$S_{t_0}=S_0$ and
\begin{equation}
\label{brogeo}
S_{t_m}=S_{t_{m-1}}\exp\big([r-\frac{1}{2}V^2](t_m-t_{m-1})+V\sqrt{t_m-t_{m-1}}X^m \big),
\quad\forall 1\leq m\leq d,
\end{equation}

where $X^1,\ldots,X^d$ are independent standard normals. Thus,
$$c=\m{esp}[g(X)\m{ind}_D(X)],$$
with $g$ some deterministic function, $D=\{x\in\m{R}^d:\,g(x)>0\}$, and $X$ a
$\m{R}^d$-valued random variable with law
$\m{nor}(0,I_d)$.

\vspace{0.3cm}

In \cite{glass} the authors discuss and link together two issues: importance sampling and stratified sampling.

Their importance sampling technique consists in a change of mean of the gaussian vector $X$.
Let us denote by $h(x)$ the density of the law $\m{nor}(0,I_d)$ and by $h_\mu(x)$ the density
of the law $\m{nor}(\mu,I_d)$ for any $\mu\in\m{R}^d$. We have,
$$c=\int_Dg(x)\frac{h(x)}{h_\mu(x)}h_\mu(x)dx=\m{esp}[g(X+\mu)\frac{h(X+\mu)}{h_\mu(X+\mu)}\m{ind}_D(X+\mu)].$$
 The variance of $g(X+\mu)\frac{h(X+\mu)}{h_\mu(X+\mu)}\m{ind}_D(X+\mu)$ is given by
$$\int_D\Big(g(x)\frac{h(x)}{h_\mu(x)}-c\Big)^2h_\mu(x)dx.$$
Heuristically, this indicates that an effective choice of $h_\mu$ should give weight to points for which the product of the payoff and the density is large. In other words, if we define $G(x)=\log g(x)$ we should look for $\mu\in\m{R}$ that verifies,
\begin{equation}
\label{muargmax}
 \mu=\argmax_{x\in D}\,\Big(\, G(x)-\frac{1}{2}x'x\,\Big)
\end{equation}

The most significant part of the paper \cite{glass} is aimed at giving an asymptotical sense to
this heuristic, using large deviations tools.

The idea is then to sample $g(X+\mu)\frac{h(X+\mu)}{h_\mu(X+\mu)}\m{ind}_D(X+\mu)$.

\vspace{0.2cm}

Standard computations show that for any $\mu\in\m{R}^d$,
$$
c=\m{esp}\big[g(X+\mu)e^{-\mu'X-(1/2)\mu'\mu}\m{ind}_D(X+\mu) \big].
$$
Thus the problem is now to build a Monte Carlo estimator of $c=\m{esp}f_\mu(X)$, sampling
$f_\mu(X)$ with $X\sim\m{nor}(0,I_d)$, and with $f_\mu(x)=g(x+\mu)e^{-\mu'x-(1/2)\mu'\mu}\m{ind}_D(x+\mu)$,
for the vector $\mu$ satisfying \eqref{muargmax}.

\vspace{0.3cm}
The authors of \cite{glass} then propose to use a stratified estimator of $c=\m{esp}f_\mu(X)$.
Indeed for $u\in\m{R}^d$ with $u'u=1$, and $a<b$ real numbers, it is easy to sample according to the conditional law of $X$ given 
$u'X\in[a,b]$. 

It can be done in the following way (see Subsection 4.1 of \cite{glass} for details).
We first sample $Z=\Phi^{-1}(V)$ with $\Phi^{-1}$ the inverse of the cumulative normal distibution, and $V=\Phi(a)+U(\Phi(b)-\Phi(a))$, with $U$ uniform on $[0,1]$. Second we sample $Y\sim\m{nor}(0,I_d)$ independent of $Z$.
We then compute,
$$X=uZ+Y-u(u'Y),$$
which by contruction has the desired conditional law.

\vspace{0.2cm} 

Let be $u\in\m{R}^d$ satisfy $u'u=1$. With our notation the stratified estimator $\widehat{c}$ in \cite{glass} is built in the following way. They take $I=100$.
As in subsection \ref{ssfirstex} we denote by $y_\alpha$ the $\alpha$-quantile of the law $\m{nor}(0,1)$. For all $1\leq i\leq I$, they take 
$A_i=\{x\in\m{R}^d: y_{\frac{i-1}{I}}<u'x\leq y_{\frac{i}{I}}\}$.
That is to say $X_i$ has the conditional law of $X$ given $y_{\frac{i-1}{I}}<u'X\leq y_{\frac{i}{I}}$, for all $1\leq i\leq I$.
As in this setting $u'X\sim\m{nor}(0,1)$, they have $p_i=1/I$ for all $1\leq i\leq I$.

They then do proportional allocation, that is to say, $N_i=p_iN$ for all $1\leq i\leq I$, where $N$ is the total number of drawings (in other words $q_i=p_i$). Then, the variance of their stratified estimator is
$$\frac{1}{N}\sum_{i=1}^Ip_i\sigma_i^2.$$
According to the Introduction, that choice ensures variance reduction.

\vspace{0.2cm}

The question of the choice 
of the projection direction $u$ arises. The authors take
$u=\mu/(\mu'\mu)$, with the vector $\mu$ satisfying \eqref{muargmax} that has been used
for the importance sampling. They claim that this provides in practice a very efficient projection direction, for their stratified estimator with proportional allocation.

As $\big(\sum_{i=1}^Ip_i\sigma_i\big)^2\leq\sum_{i=1}^Ip_i\sigma_i^2$ (i.e. proportional allocation is suboptimal),
 if $u$ is a good projection direction for a stratified estimator with proportional allocation, it is a good direction  for a stratified estimator with optimal allocation.

\vspace{0.3cm}
In the sequel we take the same direction $u$ and the same strata as in \cite{glass}, and discuss allocation. Indeed we may wish to do optimal allocation and take
$q_i=q_i^*=\frac{p_i\sigma_i}{\sum_jp_j\sigma_j}$. The trouble is the analytical computation of the quantities
$$\sigma_i^2=\m{var}(f_\mu(X)|u'X\in(y_{\frac{i-1}{I}},y_{\frac{i}{I}}]),$$
is not tractable, at least when $f_\mu$ is not linear. As the $p_i$'s are known, this is exactly the kind of situation where our adaptive stratified estimator can be useful. 

\subsubsection{The results}

In all the tests we have taken $S_0=50$, $V=0.1$, $r=0.05$ and $T=1.0$. The total number of drawings is $N=1000000$.

We call GHS the procedure used in \cite{glass}, that is importance sampling plus stratified sampling with proportional allocation. We call SSAA our procedure, that is the same importance sampling plus stratified sampling with adaptive allocation.

More precisely in the procedure SSAA we choose $N^1=100000$, $N^2=400000$, $N^3=500000$ and compute
our adaptive stratified estimator $\hat{c}^3$ of $c=\m{esp}f(X)$, with the same strata as in GHS. We have used procedure {\it a)} for the computation of allocations. We denote by $\bar{c}$ the GHS estimator of $c$.

\vspace{0.3cm}

We call <<variance GHS>> or <<variance SSAA>> the quantity
$\widehat{\sigma}$, which is an estimation of the variance of $\bar{c}$ or $\hat{c}^3$.
More precisely for GHS,
$$(\widehat{\sigma})^2=\frac{1}{N}\sum_{i=1}^Ip_i\widehat{\sigma_i}^2,$$
where for each $1\leq i\leq I$,
$$\widehat{\sigma_i}^2=\frac{1}{p_iN}\sum_{j=1}^{p_iN}f^2(X_i^j)
-\Big(\frac{1}{p_iN}\sum_{j=1}^{p_iN}f(X_i^j)\Big)^2,$$
and for SSAA
$$(\widehat{\sigma})^2=\frac{1}{N}\Big(\sum_{i=1}^Ip_i\widehat{\sigma_i}\Big)^2,$$
where for each $1\leq i\leq I$,
$$(\widehat{\sigma_i})^2=\frac{1}{N_i^3}\sum_{j=1}^{N_i^3}f^2(X_i^j)
-\Big(\frac{1}{N_i^3}\sum_{j=1}^{N_i^3}f(X_i^j)\Big)^2.$$

\begin{table}
 \begin{center}
\begin{tabular}{ccccc}
\hline
d &K &Price&variance SSAA&ratio GHS/SSAA\\
\\
\hline
16& 45& 6.05& $2.37\times 10^{-8}$& 2.04\\
& 50& 1.91& $1.00\times 10^{-7}$& 35\\
& 55& 0.20&  $5.33\times 10^{-9}$ &  39.36 \\
\\
64& 45& 6.00& $3.36\times 10^{-9}$  &   3.34\\
& 50& 1.84& $9.00\times 10^{-10}$   &  1.60 \\
& 55& 0.17& $6.40\times 10^{-9}$   &  61  \\
\hline
\end{tabular}
\caption{Results for a call option with $S_0=50$, $V=0.1$, $r=0.05$, $T=1.0$ and $N=1000000$ (and $I=100$).}
\label{tab1} 
\end{center}
\end{table}

\vspace{0.5cm}

Tables \ref{tab1} and \ref{tab2} show the results respectively for a call option and a put option. We call <<ratio GHS/SSAA>> the variance GHS divided by the variance SSAA. In general the improvement is much better for a put option. Indeed the variance is often divided by $100$ in this case.

\vspace{1cm}
 \begin{table}
 \begin{center}
\begin{tabular}{ccccc}
\hline
d &K &Price&variance SSAA&ratio GHS/SSAA\\
\\
\hline
16& 45& 0.013& $7.29\times 10^{-10}$& 107\\
& 50& 0.63& $7.29\times 10^{-8}$& 79\\
& 55& 3.74& $2.50\times 10^{-5}$& 249\\
\\
64& 45& 0.011& $5.76\times 10^{-10}$& 95\\
& 50& 0.62& $5.61\times 10^{-8}$& 64\\
& 55& 3.69& $1.85\times 10^{-5}$& 58\\
\hline
\end{tabular}
\caption{Results for a put option with $S_0=50$, $V=0.1$, $r=0.05$, $T=1.0$ and $N=1000000$ (and $I=100$).}
\label{tab2} 
\end{center}
\end{table}

A further analysis can explain these results. We plot on Figure \ref{figcall} and \ref{figput} the values of the $\widehat{\sigma}_i$'s and the estimated values of the conditional expectations $\m{esp}f_\mu(X_i)$'s, for a call and a put option, with $d=64$ and $K=45$, a case for which the ratio GHS/SSAA is 3.34 in the call case and
95 in the put case.

\begin{figure}
 \begin{center}
  \includegraphics[width=6cm]{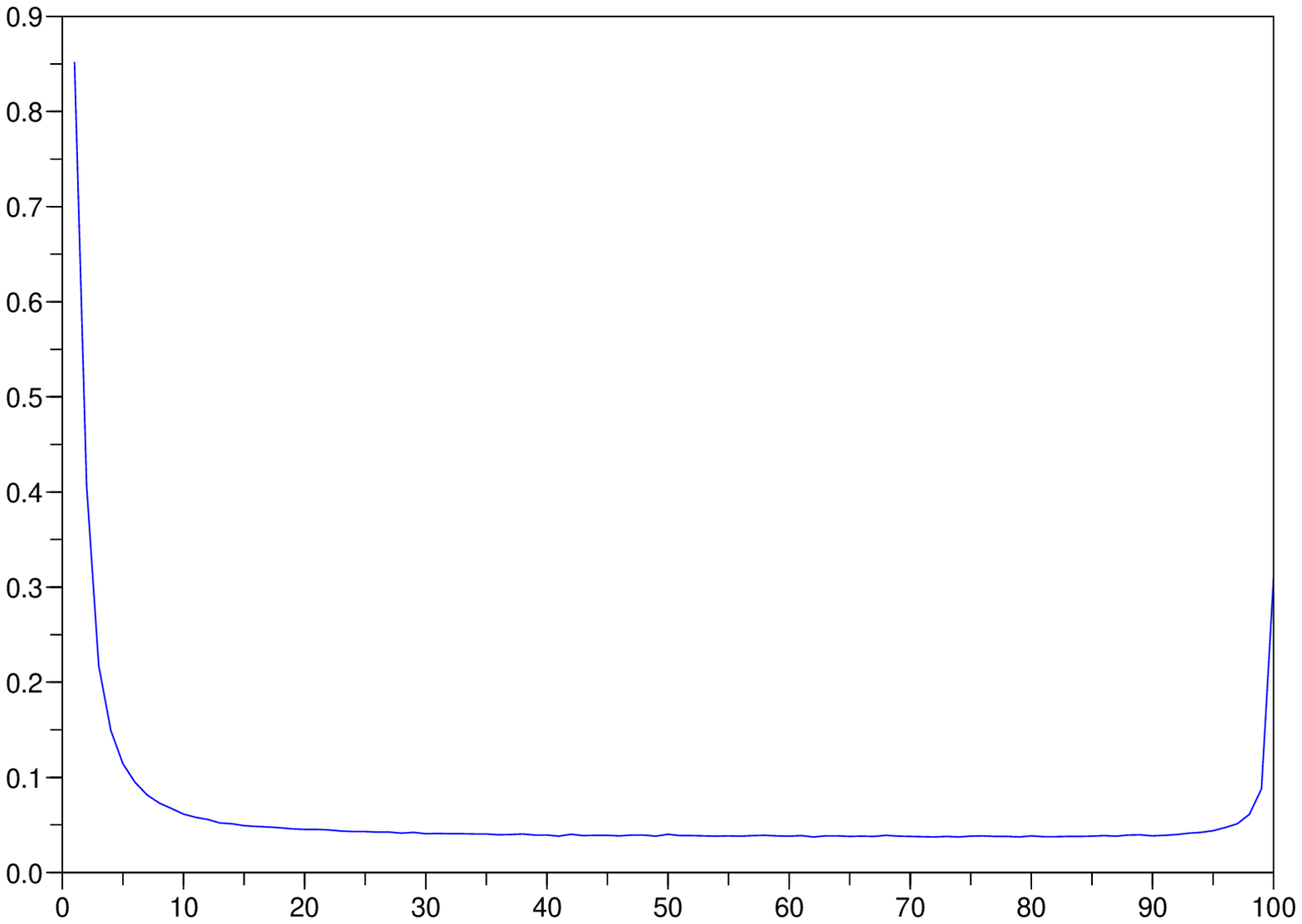}
\includegraphics[width=6cm]{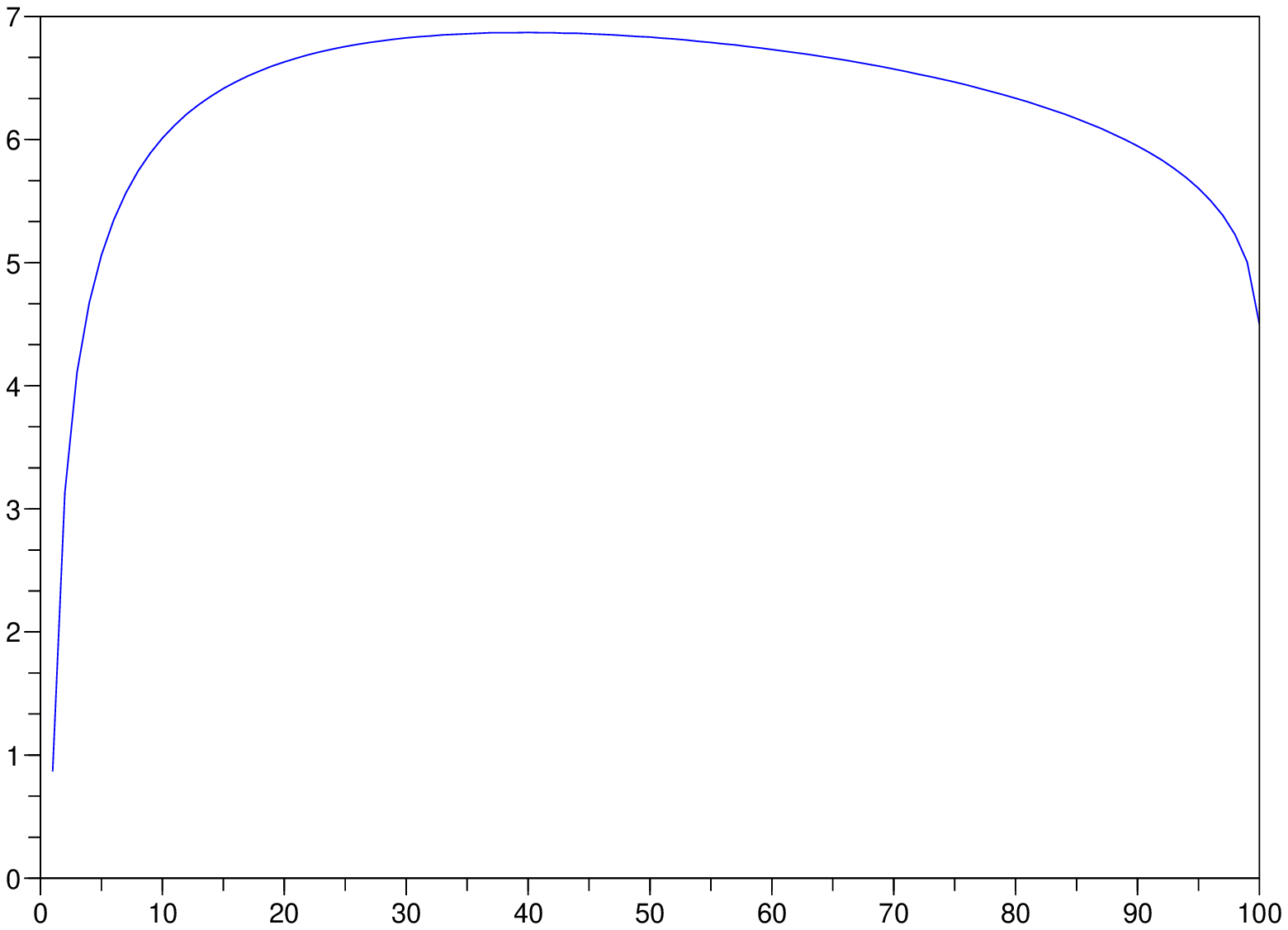}
 \caption{On the left: value of $\widehat{\sigma}_i$ in function of the stratum index $i$ in the case of a call option. On the right: estimated value of $\m{esp}f_\mu(X_i)$. (Parameters are the same as in Tables \ref{tab1}, with $d=64$ and $K=45$).}
 \label{figcall} 
 \end{center}
 \end{figure}

\begin{figure}
 \begin{center}
  \includegraphics[width=6cm]{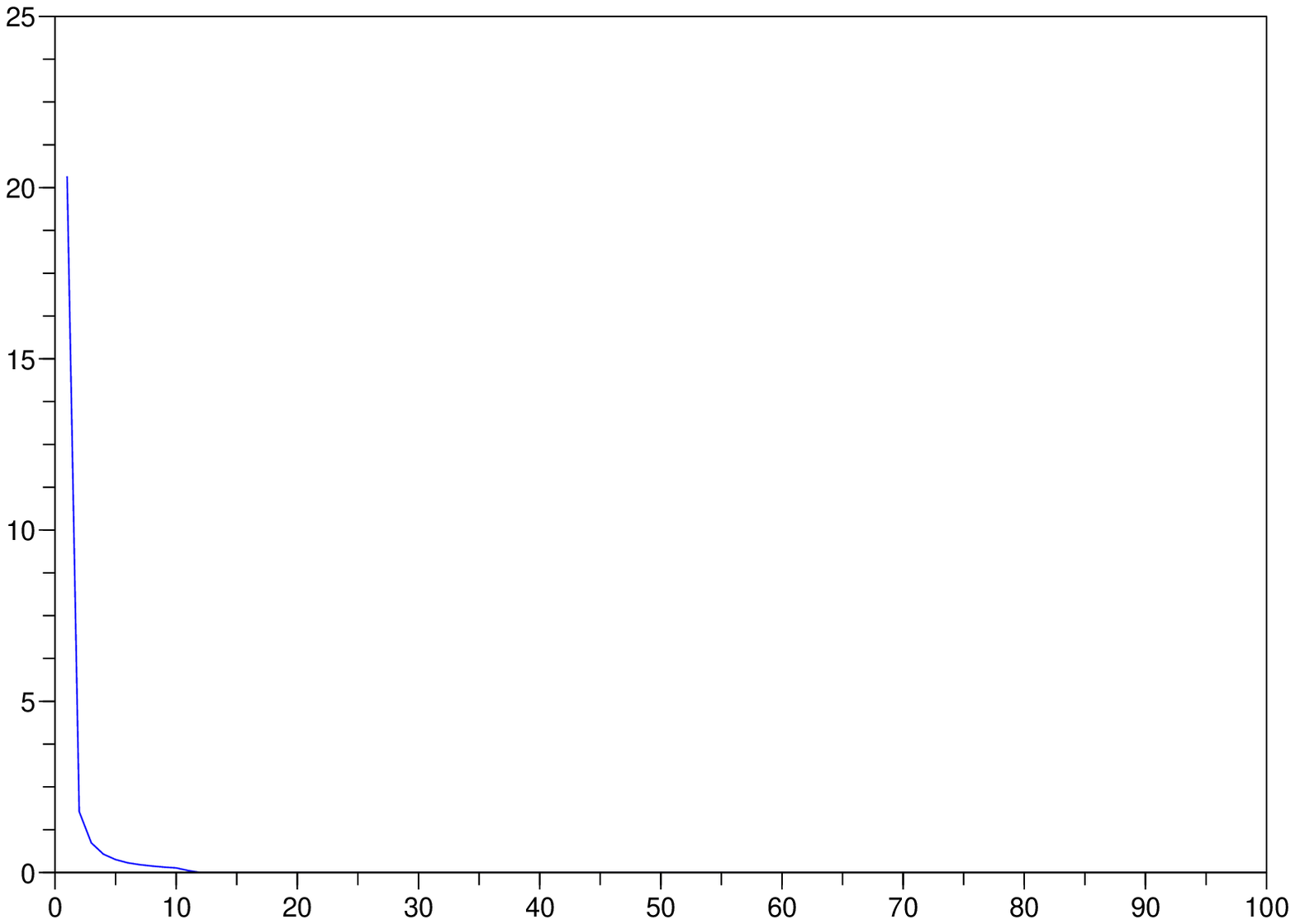}
\includegraphics[width=6cm]{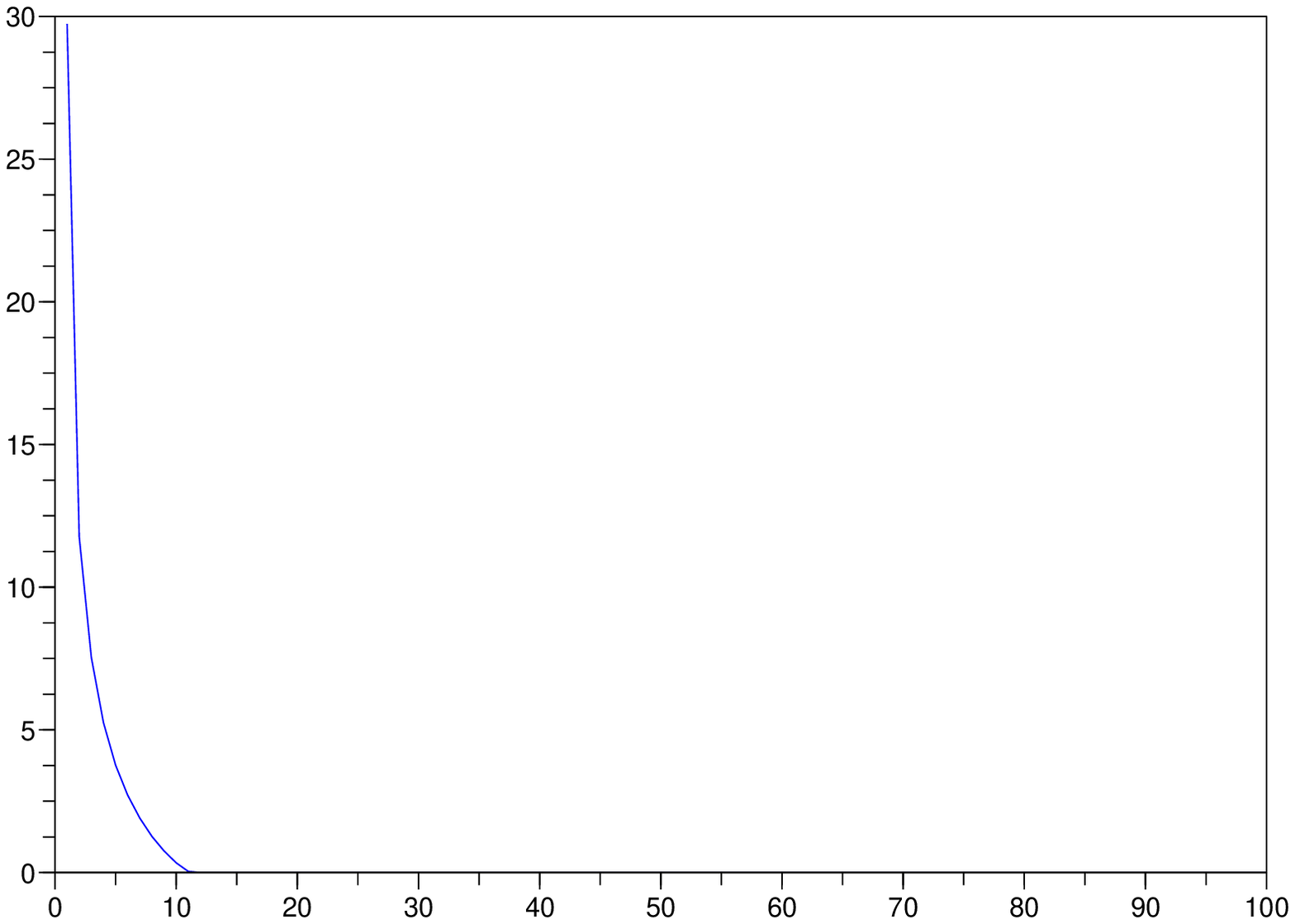}
 \caption{On the left: value of $\widehat{\sigma}_i$ in function of the stratum index $i$ in the case of a put option. On the right: estimated value of $\m{esp}f_\mu(X_i)$. (Same parameters than in Figure \ref{figcall}).}
 \label{figput} 
 \end{center}
 \end{figure}

We observe that in the case of the put option the estimated conditional variance of about $90\%$ of the strata is zero, unlike in the case of the call option. These estimated conditional variances are zero, because in the corresponding strata the estimated conditional expectations are constant with value zero. 

But these strata are of non zero probability (remember that in this setting $p_i=0.01$, for all $1\leq i\leq 100$). Thus the GHS procedure with proportional allocation will invest drawings in these strata, resulting in a loss of accuracy, while in our SSAA procedure most of the drawings are made in the strata of non zero estimated variance. 

\vspace{0.1cm}

One can wonder if the expectation in the strata of zero observed expectation is really zero, or if it is just a numerical effect. 
We define the deterministic function $s:\m{R}^d\to\m{R}$ by
$$
s(x)=\frac{S_0}{d}\sum_{m=1}^d\exp\Big(\sum_{p=1}^m\big\{ [r-\frac{V^2}{2}]\frac{T}{d}+V\sqrt{\frac{T}{d}}x^p\big\} \Big),\quad\forall x=(x^1,\ldots,x^d)'\in\m{R}^d.
$$
With the previous notations, in the put option case, we have $f_\mu(X_i)=0$ a.s.,
and thus $\m{esp}f_\mu(X_i)=0$, if $s(X_i+\mu)\geq K$ a.s. (note that $i$ denotes here the stratum index and not the component of the random vector $X_i$).

Thus the problem is to study, in function of $z\in\m{R}$, the deterministic values of  $s(x+\mu)$ for $x\in\m{R}^d$ satisfying $u'x=z$. The following facts can be shown. Whatever the value of $u$ or $z$ the quantity $s(x+\mu)$ has no upper bound. Thus in the call option case
no conditional expectation $\m{esp}f_\mu(X_i)$ will be zero. To study the problem of the lower bound we denote by $M$ the matrix of size $d\times d$ given~by
$$M=\begin{pmatrix}
1& 0 &\ldots & 0\\
1& 1& \ddots& \vdots\\
\vdots&&\ddots&0\\
1&\ldots&\ldots&1\\
         \end{pmatrix},\quad
\text{with inverse}\quad
M^{-1}=\begin{pmatrix}
1& 0 &\ldots & 0\\
-1& 1& \ddots& \vdots\\
\vdots&\ddots&\ddots&0\\
0&\ldots&-1&1\\
         \end{pmatrix},
$$
and by $\m{ind}$ the $d$-sized vector $(1,\ldots,1)'$. If we use the change of variable
$$y=M\Big( [r-\frac{V^2}{2}]\frac{T}{d}\m{ind}+V\sqrt{\frac{T}{d}}(x+\mu) \Big),
$$
we can see that
 minimizing 
 $s(x+\mu)$ for $x\in\m{R}^d$ satisfying $u'x=z$ is equivalent to minimizing
$\frac{S_0}{d}\sum_{m=1}^d\exp(y^m)$ for $y\in\m{R}^d$ satisfying
\begin{equation}
\label{contrainte}
w'y=v,
\end{equation}
where,
$$w=(M^{-1})'u,$$
and
$$v=u'\Big( [r-\frac{V^2}{2}]\frac{T}{d}\m{ind}+V\sqrt{\frac{T}{d}}(x+\mu) \Big)=V\sqrt{\frac{T}{d}}(z+u'\mu)+(r-\frac{V^2}{2})\sum_{m=1}^du_m.$$

If all the components of $w$ are stricly positive the lower bound of $s(x+\mu)$
under the constraint $u'x=z$ is
\begin{equation}
\label{fet}
s^*=\frac{S_0}{d}\times\exp\Big( \frac{v-\sum_{m=1}^dw_m\log w_m}{\sum_{m=1}^dw_m}\Big)\times
\sum_{m=1}^dw_m.
\end{equation}
If all the components of $w$ are stricly negative we get the same kind of result by a change of sign. Otherwise the lower bound is zero: it is possible to let the $y^m$'s tend to $-\infty$
with \eqref{contrainte} satisfied.

In the numerical example that we are analysing the direction vector $u$ is the same in the call or put option cases, and its components are stricly positive and decreasing with the index (see Figure \ref{figum}). Thus the components of $w$ are strictly positive and the lower bound is given by $s^*$ defined by \eqref{fet}. With $z$ taking values in the $90$ last strata we have $s^*>45$. Thus the conditional expectations $\m{esp}f_\mu(X_i)$ are truly
zero in these strata.

\begin{figure}
 \begin{center}
  \includegraphics[width=8cm]{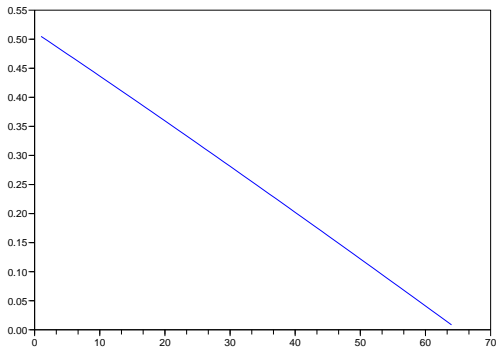}
 \caption{Value of the component $u_m$ of $u\in\m{R}^d$ in function of $m$.}
 \label{figum} 
 \end{center}
 \end{figure}

We can then wonder if it is worth stratifying the part of the real line corresponding to these strata, in other words stratifying $\m{R}^d$ and not only $D$. Maybe stratifying $D$
and making proportional allocation will provide a sufficient variance reduction. But this would require a first analysis, while our SSAA procedure avoids automatically to make
a large number of drawings in $D^c$.

\vspace{0.3cm}

To conclude on the efficiency of our algorithm in this example let us notice that the computation times of the GHS and SSAA procedures are nearly the same (less than $1\%$ additional time for the SSAA procedure). Indeed, unlike in the toy example of Subsection \ref{ssfirstex}, the computation time of the allocation of the drawings in the strata is almost negligible in comparison to the other calculations (drawings etc...).


\section{Appendix}

We justify the use of procedure {\it b)}  in the following proposition.

\begin{prop}
\label{prop-opt}
 






When $\widehat{\sigma}_i^{k-1}>0$ for some $1\leq i\leq I$, by computing at Step $k$ the $m_i^k$'s with the procedure {\it b)}
described in Section \ref{sec_algo}, we find $(m_1^k,\ldots,m_I^k)\in\m{R}_+^I$ that minimizes
$$\sum_{i=1}^I\dfrac{p_i^2(\widehat{\sigma}_i^{k-1})^2}{N_i^{k-1}+1+m_i^k},$$
under the constraint $\sum_{i=1}^Im_i^k=N^k-N^{k-1}-I$.

\end{prop}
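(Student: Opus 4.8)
The problem is a convex optimization problem: minimize $\Phi(m) = \sum_{i=1}^I \frac{a_i}{b_i + m_i}$ over the simplex $\{m \in \m{R}_+^I : \sum_i m_i = S\}$, where $a_i = p_i^2(\widehat{\sigma}_i^{k-1})^2 \geq 0$, $b_i = N_i^{k-1}+1 > 0$, and $S = N^k - N^{k-1} - I \geq 0$. The plan is to treat this as a standard constrained convex minimization and use Karush--Kuhn--Tucker (KKT) conditions, since $\Phi$ is strictly convex on the relevant coordinates (those with $a_i > 0$) and the feasible set is a nonempty compact convex polytope, so a unique minimizer on the $a_i>0$ coordinates exists. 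First I would dispose of the strata with $\widehat{\sigma}_i^{k-1}=0$: there $a_i = 0$, so $m_i$ does not appear in $\Phi$, and any feasible choice is optimal for that coordinate; setting $m_i = 0$ is the choice made in procedure {\it b)}, and this reduces the problem to the $I^k =: I'$ strata with $a_i > 0$, renumbered $1,\dots,I'$, with the same constraint $\sum_{i=1}^{I'} m_i = S$.

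**The KKT analysis.**
On the reduced problem, I would write the Lagrangian $\sum_{i=1}^{I'}\frac{a_i}{b_i+m_i} - \lambda(\sum_i m_i - S) - \sum_i \eta_i m_i$ with $\eta_i \geq 0$. Stationarity gives $-\frac{a_i}{(b_i+m_i)^2} = \lambda + \eta_i$, i.e. at an interior coordinate ($m_i > 0$, hence $\eta_i = 0$) we get $b_i + m_i = \sqrt{a_i/(-\lambda)} = p_i\widehat{\sigma}_i^{k-1}/\sqrt{-\lambda}$, while at a boundary coordinate ($m_i = 0$) complementary slackness forces $\frac{a_i}{b_i^2} \leq -\lambda$, i.e. $\frac{b_i}{p_i\widehat{\sigma}_i^{k-1}} \geq 1/\sqrt{-\lambda}$. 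So the minimizer has the form: there is a threshold $t = 1/\sqrt{-\lambda} > 0$ such that $m_i = (t\, p_i\widehat{\sigma}_i^{k-1} - b_i)^+$, and the active set (where $m_i = 0$) consists exactly of the indices with $\frac{b_i}{p_i\widehat{\sigma}_i^{k-1}} \geq t$. Next I would show these conditions pin down the active set: sorting the ratios $\frac{b_i}{p_i\widehat{\sigma}_i^{k-1}}$ in decreasing order as $\frac{N_{(i)}^{k-1}+1}{p_{(i)}\widehat{\sigma}_{(i)}^{k-1}}$ (Point i), the active set must be a prefix $\{(1),\dots,(i^*)\}$ for some $i^* \geq 0$. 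Given $i^*$, the constraint $\sum_{i>i^*} m_{(i)} = S$ becomes $t\sum_{j>i^*} p_{(j)}\widehat{\sigma}_{(j)}^{k-1} - \sum_{j>i^*}(N^{k-1}_{(j)}+1) = S$, which solves for $t$ as
$$t = \frac{S + \sum_{j=i^*+1}^{I'}(N^{k-1}_{(j)}+1)}{\sum_{j=i^*+1}^{I'}p_{(j)}\widehat{\sigma}_{(j)}^{k-1}},$$
exactly the quantity in Point ii, and then $m_{(i)} = t\,p_{(i)}\widehat{\sigma}_{(i)}^{k-1} - (N^{k-1}_{(i)}+1)$ for $i > i^*$, matching Point iii.

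**Identifying the correct threshold index and checking feasibility.**
The remaining work is to verify that the index $i^*$ selected in Point ii — the last $i$ with $\frac{N_{(i)}^{k-1}+1}{p_{(i)}\widehat{\sigma}_{(i)}^{k-1}} \geq t(i)$, where $t(i)$ denotes the above expression with $i$ in place of $i^*$ — is the one making the KKT conditions consistent, namely (a) $m_{(i)} \geq 0$ for all $i > i^*$ (nonnegativity, claimed in Point iii), and (b) $\frac{N_{(i)}^{k-1}+1}{p_{(i)}\widehat{\sigma}_{(i)}^{k-1}} \geq t(i^*)$ for all $i \leq i^*$ (so those coordinates are genuinely at the boundary). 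For (a): since the ratios are decreasing, for $i > i^*$ we have $\frac{N_{(i)}^{k-1}+1}{p_{(i)}\widehat{\sigma}_{(i)}^{k-1}} \leq \frac{N_{(i^*+1)}^{k-1}+1}{p_{(i^*+1)}\widehat{\sigma}_{(i^*+1)}^{k-1}}$, and I would argue that the defining (maximality) property of $i^*$ — that the inequality fails at $i^*+1$, i.e. $\frac{N_{(i^*+1)}^{k-1}+1}{p_{(i^*+1)}\widehat{\sigma}_{(i^*+1)}^{k-1}} < t(i^*+1)$ — combined with an algebraic identity relating $t(i^*)$ and $t(i^*+1)$ (they differ by moving one term across the fraction, a mediant-type manipulation) yields $\frac{N_{(i)}^{k-1}+1}{p_{(i)}\widehat{\sigma}_{(i)}^{k-1}} \leq t(i^*)$, which is precisely $m_{(i)} \geq 0$. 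For (b), the inequality $\frac{N_{(i^*)}^{k-1}+1}{p_{(i^*)}\widehat{\sigma}_{(i^*)}^{k-1}} \geq t(i^*)$ holds by the very choice of $i^*$, and monotonicity of the ratios extends it to all $i \leq i^*$. Finally, since the KKT conditions are sufficient for a minimum in a convex program (and the minimizer on the $a_i > 0$ coordinates is unique), the constructed point is the minimizer. \textbf{The main obstacle} I anticipate is the bookkeeping in step (a): establishing that the "last $i$" selection rule in Point ii is exactly equivalent to the sign condition on the $m_{(i)}$'s, which requires the mediant inequality comparing $t(i)$ and $t(i+1)$ and careful tracking of which direction each inequality points — the argument is elementary but error-prone, and one must also handle the edge cases $i^* = 0$ (no coordinate at the boundary, the inequality in Point ii failing for all $i$) and the degenerate case $S = 0$.
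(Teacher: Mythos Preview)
Your approach is correct and closely parallels the paper's: both first dispose of the zero-variance strata and then solve the residual strictly convex problem by Lagrangian methods. The organization differs, however. Writing $\alpha_i=p_i\widehat{\sigma}_i^{k-1}$, $n_i=N_i^{k-1}+1$, $M=N^k-N^{k-1}-I$, the paper does not introduce inequality multipliers but instead works with the dual: for each $\lambda>0$ it minimizes $\mathcal{L}(m,\lambda)=\sum_i \alpha_i^2/(n_i+m_i)+\lambda(\sum_i m_i-M)$ over $m\in\mathbb{R}_+^I$ in closed form, and then studies the dual function $\lambda\mapsto\mathcal{L}(m(\lambda),\lambda)$, whose derivative
\[
\partial_\lambda\mathcal{L}(m(\lambda),\lambda)=\sum_i \mathbf{1}_{\{\lambda\le \alpha_i^2/n_i^2\}}\bigl(\alpha_i/\sqrt{\lambda}-n_i\bigr)-M
\]
is continuous and strictly decreasing on $(0,\max_i \alpha_i^2/n_i^2]$. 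This monotonicity yields at once the equivalence
\[
i\le i^* \quad\Longleftrightarrow\quad \partial_\lambda\mathcal{L}\ge 0 \text{ at }\lambda=\alpha_{(i)}^2/n_{(i)}^2 \quad\Longleftrightarrow\quad \frac{n_{(i)}}{\alpha_{(i)}}\ge t(i),
\]
which is precisely the selection rule in Point~ii) and dispatches your ``main obstacle'' (the verification in~(a)) in one stroke, without any mediant manipulation; the paper then closes by noting that $(m(\lambda^*),\lambda^*)$ is a saddle point and hence $m(\lambda^*)$ solves the constrained problem. Your direct KKT route with explicit multipliers $\eta_i$ is equally valid and arguably more standard; the paper's dual-function route trades a slightly less direct link to the primal minimizer for a cleaner identification of the threshold index.
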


\begin{proof} 
First note that if $\widehat{\sigma}_i^{k-1}=0$ for some index $i$ it is clear that we have to set $m_i^k=0$ and to rewrite the minimization problem for the indexes corresponding to
$\widehat{\sigma}_i^{k-1}>0$. This corresponds to the very beginning of procedure {\it b)}.

For the seek of simplicity, and without loss of generality, we consider in the sequel
that $\widehat{\sigma}_i^{k-1}>0$ for all $1\leq i\leq I$, and thus work with the indexation
$\{1,\ldots,I\}$.

We will note $M=N^k-N^{k-1}-I$, and, for all $1\leq i\leq I$, $n_i=N_i^{k-1}+1$, $\alpha_i=p_i\widehat{\sigma}_i^{k-1}$, and $m_i=m_i^k$. We thus seek $(m_1,\ldots,m_I)\in\m{R}_+^I$
that minimizes $\sum_{i=1}^I\frac{\alpha_i^2}{n_i+m_i}$ under the constraint
$\sum_{i=1}^Im_i=M$.

\vspace{0.2cm}

{\it Step 1: Lagrangian computations.}
We write the Lagrangian corresponding to our minimization problem, for all $(m,\lambda)\in\m{R}_+^I\times\m{R}$:

\begin{equation*}
 \m{L}(m,\lambda)=\sum_{i=1}^I\frac{\alpha_i^2}{n_i+m_i}+\lambda(\sum_{i=1}^Im_i-M)
=\sum_{i=1}^Ih_i(m_i,\lambda)-\lambda M.
\end{equation*}
with $h_i(x,\lambda)=\Big(\frac{\alpha_i^2}{n_i+x}+\lambda x\Big)$ for all $i$.

We first minimize $\m{L}(m,\lambda)$ with respect to $m$ for a fixed $\lambda$.

For any $\lambda\in\m{R}$ let us denote $m(\lambda):=\argmin_{m\in\m{R}_+^I}\m{L}(m,\lambda)$.

Minimizing $\m{L}(m,\lambda)$ with respect to $m$ is equivalent to minimizing
$h_i(m_i,\lambda)$ with respect to $m_i$ for all $i$.

 The derivative of each $h_i(.,\lambda)$ has the same sign as 
$-\alpha_i^2+\lambda(n_i+x)^2$. 

If $\lambda\leq 0$ we have $m(\lambda)=(\infty,\ldots\infty)$. 

If $\lambda>0$ there are two cases to consider for each $h_i$: 

\begin{equation}
 \label{mi}
\begin{array}{c}
\text{either }\;\; \lambda>\frac{\alpha_i^2}{n_i^2}  \text{ and }m_i(\lambda)=0,\\
\\
\text{or } 
\lambda\leq\frac{\alpha_i^2}{n_i^2} \text{ and }m_i(\lambda)=\sqrt{\alpha_i^2/\lambda}-n_i.\\
\end{array}
\end{equation}

 To sum up we have
\begin{equation*}
 \m{L}(m(\lambda),\lambda)=\left\{
\begin{array}{ll}
 -\infty &\text{ if }\lambda<0,\\
\\
0&\text{ if }\lambda=0,\\
\\
\sum_{i=1}^I\Big[\m{ind}_{\{\lambda>\frac{\alpha_i^2}{n_i^2}\}}\frac{\displaystyle\alpha_i^2}
{\displaystyle n_i}
+ \m{ind}_{\{\lambda\leq\frac{\alpha_i^2}{n_i^2}\}}(2\alpha_i\sqrt{\lambda}-n_i\lambda)\Big]
-M\lambda&\text{ if }\lambda>0.
\end{array}
\right.
\end{equation*}

We now look for $\lambda^*$ that maximizes $\m{L}(m(\lambda),\lambda)$. For all
$\lambda\in (0,\infty)$ we have,

\begin{equation}
\label{derivative} 
\partial_\lambda\m{L}(m(\lambda),\lambda)=\sum_{i=1}^I\m{ind}_{\{\lambda\leq\frac{\alpha_i^2}{n_i^2}\}}\big(\frac{\alpha_i}{\sqrt{\lambda}}-n_i\big)-M.
\end{equation}
This function is continuous on $(0,+\infty)$, equal to $-M$ for $\lambda\geq
\max_i\frac{\alpha_i^2}{n_i^2}$, decreasing on
$(0,\max_i\frac{\alpha_i^2}{n_i^2}]$ and tends to $+\infty$ as $\lambda$ tends to
$0$. We deduce that
 $\lambda\mapsto\m{L}(m(\lambda),\lambda)$ reaches its unique maximum at some $\lambda^*\in(0,\max_i\frac{\alpha_i^2}{n_i^2})$. 

If $\partial_\lambda\m{L}\Big(m\big(\frac{\alpha_{(i)}^2}{n_{(i)}^2}\big),\frac{\alpha_{(i)}^2}{n_{(i)}^2}\Big)<0$ for all $1\leq i\leq I$, we set $i^*=0$. 

Otherwise we sort in increasing order the $\alpha_i^2/n_i^2$'s, index with  $(i)$ the ordered quantities, and note $i^*$ the integer such that
\begin{equation}
\label{grad-lag}
\partial_\lambda\m{L}\Big(m\big(\frac{\alpha_{(i^*)}^2}{n_{(i^*)}^2}\big),\frac{\alpha_{(i^*)}^2}{n_{(i^*)}^2}\Big)\geq 0\quad\text{and}\quad
\partial_\lambda\m{L}\Big(m\big(\frac{\alpha_{(i^*+1)}^2}{n_{(i^*+1)}^2}\big),\frac{\alpha_{(i^*+1)}^2}{n_{(i^*+1)}^2}\Big)<0.
\end{equation}

Then $\lambda^*$ belongs to $\big[\frac{\alpha_{(i^*)}^2}{n_{(i^*)}^2},\frac{\alpha_{(i^*+1)}^2}{n_{(i^*+1)}^2}\big)$,
or $\big(0,\frac{\alpha_{(1)}^2}{n_{(1)}^2}\big)$ if $i^*=0$.
But on this interval 
$$\partial_\lambda\m{L}(m(\lambda),\lambda)=\sum_{j=i^*+1}^I(\frac{\alpha_{(j)}}{\sqrt{\lambda}}-n_{(j)})-M.$$
As $\partial_\lambda\m{L}(m(\lambda^*),\lambda^*)=0$ we have,
$$\frac{1}{\sqrt{\lambda^*}}=\frac{\displaystyle M+\sum_{j=i^*+1}^In_{(j)}}
{\displaystyle \sum_{j=i^*+1}^I\alpha_{(j)}}.$$

Clearly, if 
$i^*\neq 0$, $\lambda^*\geq\frac{\alpha_{(i)}^2}{n_{(i)}^2}$ is equivalent to $i\leq i^*$. If $i^*=0$ then $\lambda^*<\frac{\alpha_{(i)}^2}{n_{(i)}^2}$ for all $1\leq i\leq I$.
Thus, according to \eqref{mi}, we have $m_{(i)}(\lambda^*)=0$ if $i\leq i^*$, and if $i>i^*$,
\begin{equation}
\label{expmi}
 m_{(i)}(\lambda^*)=\alpha_{(i)}.\frac{\displaystyle M+\sum_{j=i^*+1}^In_{(j)}}{\displaystyle\sum_{j=i^*+1}^I\alpha_{(j)}}-n_{(i)}.
\end{equation}

We have thus found $(m(\lambda^*),\lambda^*)$ that satisfies
$$\m{L}(m(\lambda^*),\lambda^*)=\max_{\lambda\in\m{R}}\min_{m\in\m{R}_+^I}\m{L}(m,\lambda),$$
which implies that $\m{L}(m(\lambda^*),\lambda^*)\leq\m{L}(m,\lambda^*)$
for all $m\in\m{R}_+^I$. Besides \eqref{expmi} implies
$\sum_{i=1}^Im_i(\lambda^*)=M$ and $\m{L}(m(\lambda^*),\lambda^*)=\m{L}(m(\lambda^*),\lambda)$ for all $\lambda\in\m{R}$.
Therefore $(m(\lambda^*),\lambda^*)$ is a saddle point of the Lagrangian and 
$m(\lambda^*)$ solves the constrained minimization problem. 

\vspace{0.2cm}
{\it Step 2.} We now look for a criterion to find the index $i^*$
satifying \eqref{grad-lag}. If $i^*\neq 0$, we have the following
equivalences using the concavity of
$\lambda\mapsto\m{L}(m(\lambda),\lambda)$ and \eqref{derivative}
\begin{equation*}
 i\leq i^*\quad\Leftrightarrow\quad\partial_\lambda\m{L}(m(\frac{\alpha_{(i)}^2}{n_{(i)}^2}),\frac{\alpha_{(i)}^2}{n_{(i)}^2})\geq 0\quad\Leftrightarrow\quad\frac{n_{(i)}}{\alpha_{(i)}}\geq
\frac{\displaystyle M+\sum_{j=i+1}^In_{(j)}}{\displaystyle\sum_{j=i+1}^I\alpha_{(j)}}.
\end{equation*}
In the same manner,
$$i^*=0\quad\Leftrightarrow\quad\frac{n_{(i)}}{\alpha_{(i)}}<
\frac{\displaystyle M+\sum_{j=i+1}^In_{(j)}}{\displaystyle\sum_{j=i+1}^I\alpha_{(j)}}
,\;\forall 1\leq i\leq I.$$

The proof of Proposition \ref{prop-opt} in then completed: in Points i) and ii) of procedure {\it b)} we find the index $i^*$ mentionned in Step 1, using the criterion of Step 2. In Point iii) we compute the solution of the optimization problem using the results of Step 1.

\end{proof}

\end{document}